\documentclass[sigconf]{aamas}

\usepackage{balance} % for balancing columns on the final page

\makeatletter
\gdef\@copyrightpermission{
  \begin{minipage}{0.2\columnwidth}
   \href{https://creativecommons.org/licenses/by/4.0/}{\includegraphics[width=0.90\textwidth]{by}}
  \end{minipage}\hfill
  \begin{minipage}{0.8\columnwidth}
   \href{https://creativecommons.org/licenses/by/4.0/}{This work is licensed under a Creative Commons Attribution International 4.0 License.}
  \end{minipage}
  \vspace{5pt}
}
\makeatother

\setcopyright{ifaamas}
\acmConference[AAMAS 2027]{Proc.\@ of the 26th International Conference
on Autonomous Agents and Multiagent Systems (AAMAS 2027)}{3--7 May 2027}{Hanoi, Vietnam}{M.~Baldoni, F.~Fang, W.~Yeoh, N.~Yorke-Smith (eds.)}
\copyrightyear{2027}
\acmYear{2027}
\acmDOI{}
\acmISBN{}

\acmSubmissionID{1637}

\submissionType{Research Paper Track}
\title{The Delegation Dilemma: Economies of Scale, Accountability, and Polycentric Governance}

\author{Arin Mukhopadhyay}
\affiliation{
  \institution{University of Edinburgh}
  \city{Edinburgh}
  \country{United Kingdom}}
\email{a.mukhopadhyay-1@sms.ed.ac.uk}

\author{Aris Filos-Ratsikas}
\affiliation{
  \institution{Imperial College London}
  \city{London}
  \country{United Kingdom}}
\email{aris.filos-ratsikas@imperial.ac.uk}

\author{Panayiotis Danassis}
\affiliation{
  \institution{University of Southampton}
  \city{Southampton}
  \country{United Kingdom}}
\email{p.danassis@soton.ac.uk}

\begin{abstract}
Many collective-action systems are neither fully decentralized nor fully centralized: agents voluntarily organize into institutions that share organizational costs and delegate decisions to representatives. Delegation can reduce individual costs, but it also transfers control to representatives whose private objectives need not align with those of their constituents. We study this tension in a repeated game with endogenous delegation to self-interested centers. A simple reduced-form model yields a \emph{scale--accountability tradeoff}: larger constituencies offer greater cost-sharing benefits to members, yet the marginal value to a center of retaining any one constituent decreases as group size grows. Consequently, the prospect of individual exit can sustain non-exploitative delegation only up to an endogenous accountable constituency size. Additionally, we identify parameter regions in which fully delegated, accountable governance, when feasible, must be \emph{polycentric}. We investigate these predictions through simulations with large language model (LLM) agents. Across five matched pairs of 30-agent simulations, polycentric structures repeatedly emerge but remain fluid, exhibiting cycles of concentration, exploitation, exit, and re-formation. We further find that models of different capacity exhibit similar willingness to assume leadership but differ substantially in their exercise of authority, with Sol initiating exploitation more frequently than Luna in matched decision states.
\end{abstract}

\keywords{Delegation, Polycentric Governance, Accountability, Coalition Formation, Large Language Model Agents, Multi-agent Governance}

\usepackage{amsmath,mathtools}
\usepackage{amsthm}
\usepackage{booktabs}
\usepackage{multirow}
\usepackage{array}
\usepackage{enumitem}
\usepackage{tikz}
\usepackage{xspace}
\usepackage{listings}

\lstdefinestyle{agentprompt}{
  basicstyle=\ttfamily\scriptsize,
  breaklines=true,
  breakatwhitespace=true,
  columns=fullflexible,
  keepspaces=true,
  showstringspaces=false,
  frame=single,
  framerule=0.2pt,
  aboveskip=5pt,
  belowskip=5pt,
  xleftmargin=2pt,
  xrightmargin=2pt
}

\newtheorem{definition}{Definition}
\newtheorem{assumption}{Assumption}
\newtheorem{lemma}{Lemma}
\newtheorem{proposition}{Proposition}
\newtheorem{theorem}{Theorem}
\newtheorem{corollary}{Corollary}
\newcommand{\mmin}{m_{\min}}
\newcommand{\mmax}{m_{\max}}
\newcommand{\mbar}{\overline m}
\newcommand{\munder}{\underline m}

\usepackage[capitalise,noabbrev]{cleveref}

\definecolor{BrightBlue}{RGB}{65, 145, 225}

\makeatletter
\newcommand\footnoteref[1]{\protected@xdef\@thefnmark{\ref{#1}}\@footnotemark}
\makeatother

\renewcommand\footnotetextcopyrightpermission[1]{} % removes footnote with conference information in first column
\begin{document}

%%% The following commands remove the headers in your paper. For final 
%%% papers, these will be inserted during the pagination process.

\pagestyle{fancy}
\fancyhead{}

%%% The next command prints the information defined in the preamble.

\maketitle 

%%%%%%%%%%%%%%%%%%%%%%%%%%%%%%%%%%%%%%%%%%%%%%%%%%%%%%%%%%%%%%%%%%%%%%%%

\begin{figure}[t]
    \centering
    \includegraphics[width=\linewidth]{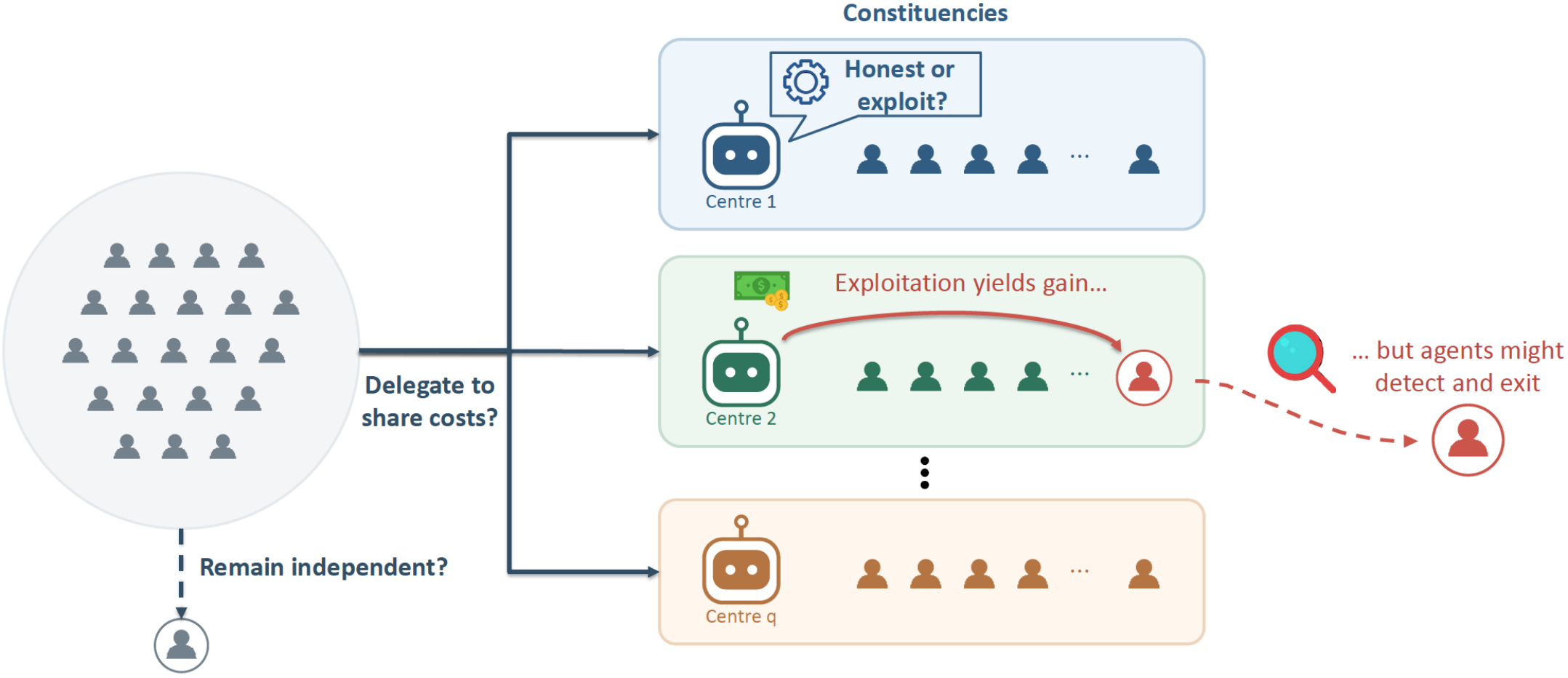}
    \caption{Overview of our delegation setting. Agents choose whether to remain independent or join a constituency (to share costs) led by a self-interested center. Each center decides whether to exploit its delegates for private gain. Delegates may detect exploitation and exit the delegation.}
    \label{fig:delegation-game}
\end{figure}

\section{Introduction}

Socially valuable activities often exhibit \emph{economies of scale}: they are costly, or even infeasible, when undertaken individually, but become easier to perform as individuals organize into groups. Farmers may cooperate in production or irrigation, fishers may jointly manage a common resource, and communities may pool resources to provide local public services. Such activities naturally give rise to constituencies, with organizers or leaders who coordinate contributions, make collective decisions, and determine how the gains from cooperation are distributed. The benefits of scale have long been central to theories of clubs and collective action \cite{buchanan1965clubs,ostrom1961organization}, as well as to more recent work on endogenous group formation \cite{page2005voluntary,ahn2008endogenous}. 

At the same time, organizational scale may alter the incentives of those entrusted with authority. As a constituency grows, the centre's marginal organizational benefit from retaining any particular constituent may decline. A self-interested centre may therefore find it profitable to exploit some delegates, even when doing so risks their eventual departure. When losing a constituent has only a small effect on the centre's future organizational costs, the threat of exit provides a weaker deterrent against exploitation. Thus, the same economies of scale that encourage collective organization can also undermine the accountability of those exercising delegated authority and create scope for opportunism. This tension echoes the classical observation that organizational growth can concentrate decision-making authority \cite{michels1915political}.

However, such opportunism does not necessarily imply the persistence of a single governing authority. Constituents who receive an exploitative treatment can instead exercise \emph{exit} \cite{hirschman1970exit}, by joining an alternative constituency or by volunteering to lead a new constituency themselves. Competition for constituents therefore constraints the size of the groups and their leaders’ ability to benefit at the expense of their members. At the same time, groups cannot become arbitrarily small, because doing so sacrifices the economies of collective production. The resulting institutional structure is therefore neither individual nor fully centralized, but \emph{polycentric}: multiple, relatively autonomous centers of organization coexist, compete for constituents, and may cooperate where doing so is mutually beneficial. This conception is closely related to the polycentric political systems of Ostrom, Tiebout, and Warren \cite{ostrom1961organization}, and to Ostrom’s later account of polycentric governance as an alternative to the dichotomy between centralized state provision and purely decentralized market exchange \cite{ostrom1990governing}. Examples of polycentricity are also prevalent in practice. For example, in experiments with members of Chilean territorial-use-right fisheries, Chávez et al. \cite{chavez2023cprcoalitions} find that while fishers frequently formed the grand coalition when monitoring was inexpensive, in fully decentralized settings without monitoring, the former coalitions were many more and significantly smaller. Thus, our general aim in this paper is to study and understand the following phenomenon: polycentricity arises not merely because different tasks are efficiently performed at different scales, but because organizational scale itself changes the incentives of those entrusted with authority.

With the rapid development and employment of AI systems, individuals and organizations increasingly delegate their decisions to autonomous AI agents, and in particular to agents powered by large language models that act on their behalf. A growing literature has already begun to examine delegation to artificial agents in collective-action environments, e.g., see \cite{fernandezdomingos2022delegation,nguyen2025navigating,faulkner2026leadership}. This raises the question of whether the institutional forces that generate polycentricity among human decision makers continue to operate when such decisions are delegated to AI agents. Will AI agents recognize the economic advantages of delegation and voluntarily form constituencies? How will self-interested AI centres exercise their authority as constituency size increases? Can the prospect of losing constituents discipline exploitation, or will centres anticipate that departing members can be replaced? Will agents establish competing centres when existing governance arrangements become unattractive? Finally, will different language models respond similarly to the same institutional incentives, or exhibit distinct patterns in their exercise of delegated authority?

More generally, we are interested to know whether interactions among autonomous AI agents reflect the endogenous checks on organizational scale that sustain polycentricity among human decision makers, or whether they instead generate qualitatively different organizational structures. 
Our aim is therefore not to assume that polycentricity is a stable outcome, but to understand the institutional incentives that make it possible and the strategic behaviour that shapes its emergence and persistence.

\subsection{Our Contributions}

We study whether polycentric arrangements emerge among autonomous LLM agents, how they evolve as constituencies gain and lose members, and whether different models exercise delegated authority in similar ways. Our main contributions are as follows:

\textbf{(i) We develop a general model of voluntary delegation, economies of scale, and exit-based accountability} (Figure~\ref{fig:delegation-game}). Our model is deliberately stylized, and intentionally abstracts away from the particulars of the task (e.g., whether it is a resource harvesting task or any other task) and the form of exploitation (e.g., whether that arises via some misallocation of resources etc). The aim of the model is to capture the fundamental tension between the efficiency gains from forming larger groups and the centers' incentives for exploitation, and the conditions under which this leads to polycentric structures. Participation depends on the average-cost saving created by joining a constituency, whereas accountability depends on the marginal organizational value that the center loses when one constituent exits. We derive a minimum viable constituency size, a maximum size at which probabilistic individual exit can deter exploitation, comparative statics, and an exact characterization of the number of centers compatible with fully delegated accountable governance. A fixed-cost specialization yields closed-form thresholds. Our theoretical results serve as baselines to ground our simulations. 

\textbf{(ii) We study endogenous governance formation among Large Language Model (LLM) agents}. Across ten matched episodes with 30 agents, polycentric structures repeatedly emerge through both bottom-up formation and fragmentation of concentrated authority, but remain fluid as fee sharing attracts agents toward larger constituencies and exploitation induces exit. We additionally compare GPT-6 Luna and GPT-6 Sol (small vs. large model) pointwise on the same post-initialisation decision states. The two models are similarly willing to become centers but differ substantially in how they exercise delegated authority, with Sol initiating exploitation more often under matched conditions.

%%%%%%%%%%%%%%%%%%%%%%%%%%%%%%%%%%%%%%%%%%%%%%%%%%%%%%%%%%%%%%%%%%%%%%%%%%%%%%%%%%%%%%%%%%%
%%%%%%%%%%%%%%%%%%%%%%%%%%%%%%%%%%%%%%%%%%%%%%%%%%%%%%%%%%%%%%%%%%%%%%%%%%%%%%%%%%%%%%%%%%%

\section{Related Work} \label{sec:related}

Our model connects work on club and coalition formation with delegation and institutional accountability. Classical club theory studies how cost sharing and congestion determine desirable group size \cite{buchanan1965clubs}, while coalition-formation models study the endogenous emergence and stability of group structures \cite{rayvohra1999coalition,page2005voluntary,ahn2008endogenous}. Most closely related to our dynamic setting,~\citet{joshi2025fear} studies sequential club formation when agents' preferences depend on club size. Endogenous group formation has also been studied in collective-resource settings \cite{chavez2023cprcoalitions} and in environments with multiple centrally coordinated groups \cite{harlalka2023dining,harlalka2025multigroup}. Our upper size limit has a different source: larger groups dilute the disciplinary effect of losing any particular member rather than generating congestion or technological inefficiency.

A second line of literature studies delegation to representatives or intermediaries.~\citet{hamman2011electoral} show experimentally that groups may voluntarily delegate public-good decisions to elected members, and that poorly performing representatives can subsequently be replaced. Related work studies representation in public-good games \cite{hauge2015representing,hamman2010selfinterest}, delegation as a coordination mechanism \cite{corazzini2020delegation}, and delegation of common-pool-resource decisions to a stakeholder \cite{bayle2024delegation}. Our distinctive mechanism combines voluntary constituency formation with a self-interested center whose incentive to retain each constituent decreases with organizational scale. This links economies of scale directly to exit-based accountability.

Finally, recent work studies institutional mechanisms among LLM agents.~\citet{faulkner2026leadership} introduce elected leadership in LLM social groups, while~\citet{tewolde2026coopeval} compare repetition, reputation, mediation, and contracting as cooperation-sustaining mechanisms. Hierarchical authority, elections, and oversight have also been studied in LLM public-good environments \cite{seyedin2026hierarchical}. Rather than imposing a fixed governance institution, we ask whether self-interested AI agents endogenously form centralized, polycentric, or individual governance structures when organizational scale simultaneously creates economic benefits and weakens accountability.
%%%%%%%%%%%%%%%%%%%%%%%%%%%%%%%%%%%%%%%%%%%%%%%%%%%%%%%%%%%%%%%%%%%%%%%%%%%%%%%%%%%%%%%%%%%
%%%%%%%%%%%%%%%%%%%%%%%%%%%%%%%%%%%%%%%%%%%%%%%%%%%%%%%%%%%%%%%%%%%%%%%%%%%%%%%%%%%%%%%%%%%

\section{Delegation -- Accountability Model} \label{sec:theory-model}

In this section, we formulate a reduced-form game that isolates the institutional incentives associated with delegation, cost sharing, and exit. \textbf{All proofs can be found in Appendix~\ref{app: Proofs}.}

There are $n\ge2$ persistent agents. At any governance epoch, agents may remain under individual control or belong to a \emph{constituency} consisting of one center and the agents who delegate control to it. Its size is $m\ge 1$, including the center; $m=1$ corresponds to individual control. Delegation is binding for $L\ge1$ operational rounds. After each nonterminal operational round, play continues with probability $\delta\in(0,1)$, so a payoff $s$ rounds in the future is reached with probability $\delta^s$. 

A constituency of size $m$ incurs a per-round organizational cost $K(m)>0$, shared equally among its members. Let
\begin{equation}
c(m)=\frac{K(m)}{m}
\label{eq:per-member-cost}
\end{equation}
denote the resulting per-member cost. This formulation can represent shared infrastructure, administration, transaction costs, access or licensing costs, monitoring systems, or other technologies whose average cost changes with organizational scale. Delegates additionally incur a private per-round loss $\ell\ge0$ from surrendering direct control. A center does not incur $\ell$, because it retains decision authority. If the center exploits $k$ delegates in a round, it receives $kd$ and each targeted delegate loses $d$, where $d>0$. Thus the reduced-form per-round payoffs are
\begin{equation}
u_C(m,k)=-c(m)+kd,
\qquad
u_D(m,z)=-c(m)-\ell-dz,
\label{eq:reduced-payoffs}
\end{equation}
where $z\in\{0,1\}$ indicates whether a given delegate is exploited. An independent agent receives $u_I=-c(1)$. Exploitation is therefore a transfer between the center and a targeted delegate; its disciplinary effect operates through subsequent governance choices.

\begin{assumption}[\textbf{Economies of scale}]
\label{ass:scale}
The per-member organizational cost $c(m)$ is weakly decreasing in $m$. Define the center's one-member retention value
\begin{equation}
D(m)=c(m-1)-c(m), \qquad m\ge2.
\label{eq:Dgeneral}
\end{equation}
We additionally assume $D(m)$ is weakly decreasing in $m$, so the marginal organizational benefit of retaining one more constituent diminishes with scale.
\end{assumption}

The first part of Assumption~\ref{ass:scale} captures average-cost economies of scale. The second separates the mechanism relevant for accountability: as constituencies grow, \emph{losing any particular constituent has a smaller effect on the center's own organizational cost.}

\subsection{Delegation and Economies of Scale}

Relative to individual control, membership in a constituency of size $m$ saves the delegate
\begin{equation}
S(m)=c(1)-c(m)
\label{eq:saving}
\end{equation}
per round in organizational cost. The net per-round advantage of non-exploitative delegation is therefore
\begin{equation}
B(m)=c(1)-c(m)-\ell.
\label{eq:participation-benefit}
\end{equation}
Delegation is individually rational when $B(m)\ge0$.

\begin{proposition}[\textbf{General participation threshold}]
\label{prop:general-mmin}
Under Assumption~\ref{ass:scale}, if there exists a nontrivial constituency size for which delegation is individually rational, then the set of individually rational sizes is upward closed and its minimum is
\begin{equation}
\boxed{\mmin=\min\left\{m\ge2:c(1)-c(m)\ge\ell\right\}}
\label{eq:mmin-general}
\end{equation}
If this set is empty, no finite constituency makes delegation individually rational through organizational cost sharing alone.
\end{proposition}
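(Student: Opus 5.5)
The plan is to reduce everything to the monotonicity of $B(m)=c(1)-c(m)-\ell$ on $\{m\ge2\}$. By Assumption~\ref{ass:scale}, $c$ is weakly decreasing, so $-c(m)$ and hence $B(m)$ are weakly increasing in $m$. No other structure is needed; in particular the second part of the assumption, about $D(m)$, plays no role here.

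\textbf{Upward closedness.} Let $\mathcal M=\{m\ge2: c(1)-c(m)\ge\ell\}$ be the set of nontrivial individually rational sizes. I would take any $m\in\mathcal M$ and any $m'\ge m$. Then $c(m')\le c(m)$, which gives $c(1)-c(m')\ge c(1)-c(m)\ge\ell$, so $m'\in\mathcal M$. This argument only compares sizes on the integer domain where $c$ is defined, so nothing about $n$ needs to be tracked. If one wants $\mathcal M$ restricted to $\{2,\dots,n\}$, the same argument shows that set is an upper interval of $\{2,\dots,n\}$.

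\textbf{Minimum.} When $\mathcal M$ is nonempty, it is a nonempty subset of the positive integers, so by well-ordering it has a least element; this least element is $\mmin$ as displayed. Combined with upward closedness, this gives $\mathcal M=\{m\ge2: m\ge\mmin\}$, so $\mmin$ is a genuine threshold.

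\textbf{Empty case.} If $\mathcal M=\emptyset$, then $c(1)-c(m)<\ell$ for every $m\ge2$. Equivalently, $B(m)<0$ for every finite constituency size, which is exactly the statement that cost sharing alone never makes delegation individually rational.

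There is no real obstacle. The only point needing care is to state explicitly that the minimum exists by well-ordering rather than by any continuity argument.
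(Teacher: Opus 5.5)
Your proposal is correct and follows the paper's proof. Both arguments rest on $B(m)=c(1)-c(m)-\ell$ being weakly increasing because $c$ is weakly decreasing; this gives upward closedness and identifies $\mmin$ as the least nontrivial size with $B(m)\ge 0$, and your explicit appeal to well-ordering and your remark that the $D(m)$ part of Assumption~\ref{ass:scale} is unused are harmless additions the paper leaves implicit.
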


\paragraph{\textbf{Fixed-cost Specialization}} 

To obtain closed-form results, we specialize to the case in which the total organizational cost is independent of constituency size:
\begin{equation}
K(m)=\kappa, \qquad \kappa>0,
\label{eq:fixed-cost}
\end{equation}
so that $c(m)=\kappa/m$. This specification captures a fixed cost that can be shared among constituency members and is the environment implemented in our simulations.

\begin{corollary}[\textbf{Minimum viable constituency size under fixed costs}]
\label{prop:mmin}
Suppose $K(m)=\kappa$ and $0\le\ell<\kappa$. Non-exploitative delegation is individually rational if and only if
\begin{equation}
m\ge \frac{\kappa}{\kappa-\ell}
\end{equation}
The minimum nontrivial integer constituency size is
\begin{equation}
\boxed{\mmin=\max\left\{2,\left\lceil\frac{\kappa}{\kappa-\ell}\right\rceil\right\}}
\label{eq:mmin}
\end{equation}
If $\ell\ge\kappa$, no finite constituency makes delegation individually rational through fixed-cost sharing alone.
\end{corollary}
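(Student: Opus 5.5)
We specialize Proposition~\ref{prop:general-mmin} to $c(m)=\kappa/m$. First, Assumption~\ref{ass:scale} holds in this setting. Clearly $c(m)=\kappa/m$ is decreasing. Moreover,
\[
D(m)=\frac{\kappa}{m-1}-\frac{\kappa}{m}=\frac{\kappa}{m(m-1)},
\]
which is decreasing in $m\ge2$. So the general threshold result applies. We can also argue directly, since the participation condition is explicit:
\[
B(m)=\kappa-\frac{\kappa}{m}-\ell .
\]

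\textbf{Case $0\le\ell<\kappa$.} The condition $B(m)\ge0$ reads $\kappa(1-1/m)\ge\ell$, that is, $\kappa/m\le\kappa-\ell$. Because $m>0$ and $\kappa-\ell>0$, we can multiply through without reversing the inequality. This gives
\[
m\ge\frac{\kappa}{\kappa-\ell},
\]
which establishes the stated ``if and only if''.

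The integer minimum is then the least integer that is both at least $2$ (the problem restricts to nontrivial $m$) and at least $\kappa/(\kappa-\ell)$. That integer is $\max\{2,\lceil\kappa/(\kappa-\ell)\rceil\}$. This matches \eqref{eq:mmin-general}, and upward closure is immediate from the form of the condition.

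The only subtlety is the edge case $\ell=0$. There $\kappa/(\kappa-\ell)=1$, so the $\max$ with $2$ is what excludes the trivial size $m=1$. This is the reason the $\max$ appears in \eqref{eq:mmin}.

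\textbf{Case $\ell\ge\kappa$.} For every finite $m$ we have $c(1)-c(m)=\kappa-\kappa/m<\kappa\le\ell$. Hence $B(m)<0$ for all finite $m$, and the set of individually rational sizes is empty. This is the empty-set alternative in Proposition~\ref{prop:general-mmin}.

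No step is a real obstacle. The only points needing care are the sign when rearranging, which is justified by $\kappa-\ell>0$, and the strict inequality $\kappa/m>0$ used in the second case.
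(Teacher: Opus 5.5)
Your proof is correct and follows the paper's own argument: rewrite $B(m)\ge 0$ as $\kappa(1-1/m)\ge\ell$, rearrange using $\kappa-\ell>0$, take the smallest nontrivial integer, and use $\kappa(1-1/m)<\kappa\le\ell$ for the case $\ell\ge\kappa$. Your check of Assumption~\ref{ass:scale} and your remark on the $\ell=0$ edge case are not needed for the direct argument, but they are accurate and harmless.
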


\subsection{When Does Exit Discipline the Center?}
We next model the center's temptation to exploit a constituent. The parameter $d$ (defined in Eq.~\ref{eq:reduced-payoffs}) is deliberately reduced form: it is the center's per-round private gain from exploiting one delegate and can represent appropriation, preferential allocation, extraction of rents, or any other benefit obtained by using delegated authority against a constituent's interest.

\begin{assumption}[\textbf{Targeted exploitation and probabilistic exit}]
\label{ass:deviation}
At the beginning of a governance epoch, a center in a constituency of size $m\ge2$ can exploit one delegate, obtaining an additional payoff $d>0$ in each of the $L$ locked-in rounds while imposing the same per-round loss on that delegate. At the next governance opportunity, the exploitation is detected and results in that delegate's departure with probability $\alpha\in[0,1]$. Conditional on departure, the delegate is not replaced over the continuation horizon, so the change in the center's per-member organizational cost persists. The deviation does not otherwise alter the continuation probability.
\end{assumption}

The parameter $\alpha$ is a reduced-form accountability probability. If detection occurs with probability $p_{\mathrm{det}}$ and a delegate exits conditional on detection with probability $p_{\mathrm{exit}\mid\mathrm{det}}$, then $\alpha=p_{\mathrm{det}}p_{\mathrm{exit}\mid\mathrm{det}}$. This lets the theory encompass imperfect observability as well as imperfect behavioural response without changing the core incentive comparison.

\begin{definition}[\textbf{Accountable constituency}]
A constituency of size $m$ is \emph{accountable under individual exit} if, under Assumption~\ref{ass:deviation}, the center weakly prefers non-exploitative behaviour to the targeted exploitation deviation.
\end{definition}

\begin{proposition}[\textbf{General accountability condition}]
\label{prop:general-accountability}
Under Assumptions~\ref{ass:scale}--\ref{ass:deviation}, a constituency of size $m\ge2$ is accountable under individual exit if and only if
\begin{equation}
\boxed{\alpha\delta^L D(m)\ge d(1-\delta^L)}
\label{eq:IC-general}
\end{equation}
Because $D(m)$ is weakly decreasing, the set of accountable constituency sizes is downward closed. When the largest accountable size is finite, define
\begin{equation}
\mmax=\max\left(\{1\}\cup\left\{m\ge2:\alpha\delta^L D(m)\ge d(1-\delta^L)\right\}\right).
\label{eq:mmax-general}
\end{equation}
\end{proposition}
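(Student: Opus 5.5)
The plan is to compare, from the center's perspective at the start of a governance epoch, the expected discounted payoff from non-exploitation with that from the targeted exploitation deviation of Assumption~\ref{ass:deviation}. Everything common to the two paths cancels, so only two terms remain: the exploitation gain during the lock-in, and the expected organizational-cost loss afterwards.

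\textbf{Gain during the lock-in.} With the normalization that round $s$ (counting from $0$) is reached with probability $\delta^s$, exploiting one delegate yields an extra $d$ in each of the $L$ locked-in rounds. The expected discounted gain is therefore
\begin{equation*}
d\sum_{s=0}^{L-1}\delta^s = \frac{d(1-\delta^L)}{1-\delta}.
\end{equation*}
The per-member cost $c(m)$ is identical under both behaviours during these rounds, since the constituency size is locked in.

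\textbf{Loss after the lock-in.} At the next governance opportunity, reached with probability $\delta^L$, the exploited delegate leaves with probability $\alpha$. Conditional on departure, the delegate is not replaced, so the center's per-round cost rises from $c(m)$ to $c(m-1)$ permanently. That is an increase of $D(m)$ per round, by \eqref{eq:Dgeneral}.

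Because the deviation does not alter the continuation probability, the discounted value of this stream from round $L$ onward is
\begin{equation*}
\alpha D(m)\sum_{s\ge L}\delta^s = \frac{\alpha\delta^L D(m)}{1-\delta}.
\end{equation*}

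\textbf{The accountability condition.} Everything else is unchanged, so it cancels between the two paths. This covers the other delegates, the center's fee share on those delegates, and continuation play absent departure. The center weakly prefers non-exploitation if and only if
\begin{equation*}
\frac{\alpha\delta^L D(m)}{1-\delta}\ge\frac{d(1-\delta^L)}{1-\delta}.
\end{equation*}
Multiplying through by $1-\delta>0$ gives \eqref{eq:IC-general}.

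I expect the main obstacle to be modelling rather than algebra. The deviation must be defined so that post-departure play really coincides with the baseline except for the size change. In particular:
\begin{itemize}
\item the center does not re-exploit in later epochs in either branch;
\item if the delegate stays, continuation is identical to the non-exploitative path.
\end{itemize}
I would state this explicitly as a one-shot deviation comparison, with identical continuation behaviour in both branches. In that comparison, the no-departure branch contributes zero difference after round $L$.

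\textbf{Downward closedness and $\mmax$.} The right-hand side $d(1-\delta^L)$ does not depend on $m$. The left-hand side $\alpha\delta^L D(m)$ is weakly decreasing in $m$ by Assumption~\ref{ass:scale}, since $\alpha\delta^L\ge0$. So if $m$ satisfies \eqref{eq:IC-general}, every $m'$ with $2\le m'\le m$ also satisfies it, and the accountable set is downward closed.

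It follows that, whenever it is bounded, the accountable set is an initial segment $\{2,\dots,\mmax\}$, possibly empty. Including $\{1\}$ in \eqref{eq:mmax-general} makes the maximum well defined even when no $m\ge2$ is accountable, with $m=1$ trivially free of delegates to exploit.
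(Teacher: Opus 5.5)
Your proposal is correct and follows the paper's proof essentially step for step: the same geometric-sum deviation gain $d(1-\delta^L)/(1-\delta)$, the same expected persistent exit loss $\alpha\delta^L D(m)/(1-\delta)$, cancellation of the common factor, and downward closedness from the weak monotonicity of $D$. Your explicit framing as a one-shot deviation with identical continuation play in both branches states precisely an assumption the paper leaves implicit.
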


The general condition separates two distinct effects of scale. Participation depends on the \emph{average} organizational saving $c(1)-c(m)$, whereas accountability depends on the \emph{marginal} retention value $D(m)=c(m-1)-c(m)$. Organizational scale can therefore make delegation increasingly attractive to constituents while simultaneously making each individual constituent less valuable to the center.

\begin{corollary}[\textbf{Maximum accountable constituency size under fixed costs}]
\label{prop:mmax}
Under $K(m)=\kappa$, condition~\eqref{eq:IC-general} becomes
\begin{equation}
\alpha\delta^L\frac{\kappa}{m(m-1)}\ge d(1-\delta^L),
\label{eq:IC}
\end{equation}
or equivalently
\begin{equation}
m(m-1)\le\frac{\alpha\kappa\delta^L}{d(1-\delta^L)}.
\label{eq:ICquadratic}
\end{equation}
Hence the continuous accountability threshold is
\begin{equation}
\mbar=\frac{1+\sqrt{1+4\frac{\alpha\kappa\delta^L}{d(1-\delta^L)}}}{2},
\label{eq:mbar}
\end{equation}
and the largest integer constituency size supportable by individual exit is
\begin{equation}
\boxed{\mmax=\left\lfloor\mbar\right\rfloor}
\label{eq:mmax}
\end{equation}
If $\mmax=1$, no nontrivial delegated constituency is accountable through this exit mechanism alone.
\end{corollary}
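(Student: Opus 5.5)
We obtain the corollary by specializing Proposition~\ref{prop:general-accountability} to $c(m)=\kappa/m$, so the plan is to compute $D(m)$, substitute it into \eqref{eq:IC-general}, and solve the resulting quadratic inequality in $m$.

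\textbf{Step 1 (retention value and the inequality).} Under $K(m)=\kappa$,
\[
D(m)=\frac{\kappa}{m-1}-\frac{\kappa}{m}=\frac{\kappa}{m(m-1)},\qquad m\ge2,
\]
which is positive and strictly decreasing in $m$. This confirms that Assumption~\ref{ass:scale} holds, so Proposition~\ref{prop:general-accountability} applies. Substituting $D(m)$ into \eqref{eq:IC-general} gives \eqref{eq:IC} directly. Since $m(m-1)>0$ for $m\ge2$ and $d(1-\delta^L)>0$ (because $d>0$ and $\delta\in(0,1)$), we may multiply through by $m(m-1)$ and divide by $d(1-\delta^L)$ without reversing the inequality. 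This yields \eqref{eq:ICquadratic}. Write $R=\alpha\kappa\delta^L/\bigl(d(1-\delta^L)\bigr)\ge0$ for its right-hand side.

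\textbf{Step 2 (continuous threshold).} The function $f(x)=x^2-x$ is strictly increasing on $[1,\infty)$ with $f(1)=0$. For $R\ge0$, the equation $f(x)=R$ therefore has a unique root in $[1,\infty)$, namely the larger quadratic root $\mbar=\bigl(1+\sqrt{1+4R}\bigr)/2$, which is exactly \eqref{eq:mbar}. By monotonicity, an integer $m\ge2$ satisfies $m(m-1)\le R$ if and only if $m\le\mbar$. Hence the accountable sizes $m\ge2$ are exactly the integers in $[2,\mbar]$, consistent with the downward closedness stated in Proposition~\ref{prop:general-accountability}. The set is finite, so $\mmax$ in \eqref{eq:mmax-general} is well defined.

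\textbf{Step 3 (integer maximum and boundary cases).} We check the floor formula against the definition, which includes the element $1$. If $\mbar\ge2$, the largest integer in $[2,\mbar]$ is $\lfloor\mbar\rfloor$. If $\mbar<2$, no $m\ge2$ is accountable, so $\mmax=1$; since $\mbar\ge1$ always, we also have $\lfloor\mbar\rfloor=1$. Thus $\mmax=\lfloor\mbar\rfloor$ in both cases, and the final sentence of the corollary follows: $\mmax=1$ means no $m\ge2$ satisfies \eqref{eq:IC}.

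The only delicate point is Step 3's boundary bookkeeping: the degenerate case $\alpha=0$ (so $R=0$ and $\mbar=1$) and making sure the $\{1\}$ convention in \eqref{eq:mmax-general} agrees with the floor. The rest is routine algebra.
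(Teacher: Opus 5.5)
Your proposal is correct and follows the same route as the paper: compute $D(m)=\kappa/(m(m-1))$, substitute it into \eqref{eq:IC-general}, and take the floor of the positive root of $m^2-m-C\le 0$. Your extra bookkeeping fills in details the paper leaves implicit, namely verifying Assumption~\ref{ass:scale}, checking the signs before clearing denominators, and confirming that the $\{1\}$ convention agrees with $\lfloor\mbar\rfloor$ when $\mbar<2$ (including $\alpha=0$).
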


\paragraph{\textbf{Interpretation}}

The upper threshold does not imply that constituencies larger than $\mmax$ cannot exist or that their constituents necessarily leave. It says only that, under the stated deviation and punishment technology, the expected loss generated by one constituent's possible future exit is insufficient to make the center prefer non-exploitative behaviour. Constituents may still prefer membership despite exploitation, and replacement, collective exit, reputation, auditing, legal sanctions, or other mechanisms can change the effective accountability constraint, sustaining accountability beyond $\mmax$.

\subsubsection{\textbf{Robustness of the Accountability Constraint}}

Two simple observations strengthen the interpretation of $\mmax$.

\begin{lemma}[\textbf{Single-constituent exploitation is the hardest additive deviation}]
\label{lem:kdev}
Suppose a center simultaneously exploits $k\in\{1,\ldots,m-1\}$ constituents, gaining $kd$ per locked-in round, and each targeted constituent independently exits at the next governance opportunity with probability $\alpha$. Under Assumption~\ref{ass:scale}, if the one-constituent accountability condition~\eqref{eq:IC-general} holds, then every such additive $k$-constituent deviation is also deterred.
\end{lemma}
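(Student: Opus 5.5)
The plan is to write the center's net gain from the $k$-constituent deviation in the same form as in the proof of Proposition~\ref{prop:general-accountability}, and then compare it term by term with $k$ copies of the one-constituent condition. The deviation yields an immediate gain of $kd$ in each of the $L$ locked-in rounds, with discounted value $kd(1-\delta^L)/(1-\delta)$. After the first governance opportunity, a random number $J\sim\mathrm{Bin}(k,\alpha)$ of targeted delegates leave. By Assumption~\ref{ass:deviation}, extended to each targeted delegate, the departures are not replaced and the center's per-member cost rises permanently from $c(m)$ to $c(m-J)$. Summing this increase from round $L$ onward gives an expected discounted loss of
\begin{equation*}
\frac{\delta^L}{1-\delta}\,\mathbb{E}\bigl[c(m-J)-c(m)\bigr].
\end{equation*}
The deviation is therefore deterred if and only if
\begin{equation*}
\delta^L\,\mathbb{E}\bigl[c(m-J)-c(m)\bigr]\ge kd(1-\delta^L).
\end{equation*}

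The key step is to bound $\mathbb{E}[c(m-J)-c(m)]$ from below by $k\alpha D(m)$. First, telescope:
\begin{equation*}
c(m-j)-c(m)=\sum_{i=0}^{j-1}\bigl(c(m-i-1)-c(m-i)\bigr)=\sum_{i=0}^{j-1}D(m-i).
\end{equation*}
Since $D$ is weakly decreasing (Assumption~\ref{ass:scale}), each term satisfies $D(m-i)\ge D(m)$ for $m-i\ge2$. This holds because $j\le k\le m-1$ gives $m-i\ge m-j+1\ge2$. Hence $c(m-J)-c(m)\ge J\,D(m)$. Taking expectations and using $\mathbb{E}[J]=k\alpha$, we get $\mathbb{E}[c(m-J)-c(m)]\ge k\alpha D(m)$.

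Now combine this with \eqref{eq:IC-general}. Multiplying the one-constituent condition $\alpha\delta^L D(m)\ge d(1-\delta^L)$ by $k$ gives
\begin{equation*}
\delta^L\,\mathbb{E}\bigl[c(m-J)-c(m)\bigr]\ge k\alpha\delta^L D(m)\ge kd(1-\delta^L),
\end{equation*}
which is exactly the deterrence condition for the $k$-deviation.

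The step that needs the most care is the continuation accounting rather than the inequality. I have to make sure the multi-exit case is modeled consistently with Assumption~\ref{ass:deviation}: independent departures, no replacement, the same continuation probability, and a cost change that persists from the first post-lock-in round onward, so that the discount factors match those in Proposition~\ref{prop:general-accountability}. I would state this extension explicitly before the computation. After that, the monotonicity of $D$ is the only substantive ingredient. Independence of the exits is not actually needed, only linearity of expectation. This explains why single-constituent exploitation is the binding case: the loss from multiple exits is convex in the number of exits, because $D$ decreases as $m$ increases, so removing members from a smaller group costs more.
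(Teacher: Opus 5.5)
Your proof is correct and is essentially the paper's argument. Both proofs split the cost increase from $J$ exits into successive marginal retention values $D(\cdot)$ at sizes no larger than $m$, bound each by $D(m)$ using Assumption~\ref{ass:scale}, and compare $k\alpha\delta^L D(m)$ with the $k$-fold gain $kd(1-\delta^L)$. The only difference is that you bound $c(m-J)-c(m)\ge J\,D(m)$ pointwise before taking expectations, whereas the paper adds targets one at a time via $\alpha\,\mathbb{E}[D(m-X_{r-1})]\ge\alpha D(m)$, which relies on independence; as you note, your version needs only $\mathbb{E}[J]=k\alpha$.
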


\begin{lemma}[\textbf{The beginning of an epoch is the hardest time to deter exploitation}]
\label{lem:timing}
Suppose exploitation begins when $s\in\{1,\ldots,L\}$ locked-in rounds remain. If the accountability condition holds for $s=L$, it holds for every later point in the epoch with $s<L$.
\end{lemma}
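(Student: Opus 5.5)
We prove Lemma~\ref{lem:timing} by redoing the incentive comparison behind Proposition~\ref{prop:general-accountability} with $L$ replaced by the number $s$ of remaining locked-in rounds, and then checking that the resulting condition is monotone in $s$.

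\textbf{Step 1: gain from exploiting now.} If the center starts exploiting one delegate when $s$ locked-in rounds remain, it earns $d$ in each of those rounds. Each is reached with probability $\delta^{j}$ for $j=0,\dots,s-1$, so the expected gain is
\[
d\sum_{j=0}^{s-1}\delta^j=\frac{d(1-\delta^s)}{1-\delta}.
\]

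\textbf{Step 2: expected cost through exit.} Detection and possible departure happen at the next governance opportunity, which is reached with probability $\delta^s$. With probability $\alpha$ the delegate leaves and is not replaced (Assumption~\ref{ass:deviation}). The center then bears the higher per-member cost $c(m-1)$ instead of $c(m)$ in every later round. Relative to that opportunity, the continuation sum is $\sum_{t\ge0}\delta^t=1/(1-\delta)$, so the expected loss is
\[
\frac{\alpha\delta^s D(m)}{1-\delta}.
\]
Exploitation does not change play otherwise, so all other payoff terms cancel.

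\textbf{Step 3: the condition at time $s$.} Multiplying through by $1-\delta>0$, deterrence with $s$ rounds remaining is equivalent to
\[
\alpha\delta^s D(m)\ge d(1-\delta^s).
\]
At $s=L$ this is exactly \eqref{eq:IC-general}.

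\textbf{Step 4: monotonicity in $s$.} Write $x=\delta^s\in(0,1)$. The condition becomes $x\bigl(\alpha D(m)+d\bigr)\ge d$, that is,
\[
\delta^s\ge \frac{d}{\alpha D(m)+d}.
\]
Since $\delta\in(0,1)$, $\delta^s$ is strictly decreasing in $s$. Hence $s<L$ implies $\delta^s>\delta^L\ge d/(\alpha D(m)+d)$, so the condition holds for every $s<L$.

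The one case needing care is $\alpha D(m)=0$. Then \eqref{eq:IC-general} at $s=L$ reads $0\ge d(1-\delta^L)>0$, which is false, so the lemma holds vacuously. The monotonicity argument itself is immediate.

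\textbf{Main obstacle.} The difficulty is modelling, not algebra. We must justify that the exit cost is discounted by $\delta^s$ rather than $\delta^L$: the next governance opportunity comes after the remaining $s$ rounds, and it is assumed to be the first point at which the delegate can leave. We also need that the post-exit loss is a perpetual stream, exactly as in Proposition~\ref{prop:general-accountability}. If instead exit could only occur at a fixed epoch boundary and the horizon were treated differently, we would have to recheck that the gain still shrinks faster than the discounted loss. In the geometric setup above, a shorter remaining horizon both lowers the gain and brings the punishment closer, so the conclusion follows.
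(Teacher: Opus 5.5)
Your proof is correct and follows the paper's argument. The paper writes the condition with $s$ rounds remaining as $\frac{\alpha D(m)}{d}\frac{\delta^s}{1-\delta^s}\ge 1$ without deriving it, and uses that $\delta^s/(1-\delta^s)$ is decreasing in $s$; your Steps 1--3 spell out that derivation, and your threshold form $\delta^s\ge d/(\alpha D(m)+d)$ is an equivalent way of stating the monotonicity (your separate case $\alpha D(m)=0$ is not needed, since $\alpha D(m)+d\ge d>0$ always).
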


\subsection{How Institutional Parameters Shape Delegation}
The general formulation shows that participation is governed by average cost savings whereas accountability is governed by the marginal retention value of one constituent. We now characterize how the two institutional thresholds move with the parameters and give closed-form comparative statics for the fixed-cost specialization. Table~\ref{tab:predictions} summarizes the main results.

\begin{theorem}[\textbf{Comparative statics under fixed costs}]
\label{thm:comparative}
Assume $\kappa>0$, $d>0$, $0<\delta<1$, $L\ge1$, $0<\alpha\le1$, and $0\le\ell<\kappa$. Let
\[
\munder=\frac{\kappa}{\kappa-\ell}
\quad\text{and}\quad
\mbar=\frac{1+\sqrt{1+4C}}{2},
\qquad
C=\frac{\alpha\kappa\delta^L}{d(1-\delta^L)}.
\]
Then
\begin{align}
&\frac{\partial \munder}{\partial\kappa}\le0 \quad\text{(strict if $\ell>0$)}, &\frac{\partial \munder}{\partial\ell}>0,\\
&\frac{\partial \mbar}{\partial\kappa}>0, &\frac{\partial \mbar}{\partial d}<0,\\
&\frac{\partial \mbar}{\partial\alpha}>0, &\frac{\partial \mbar}{\partial\delta}>0,
\end{align}
and $\mbar(L+1)<\mbar(L)$. Consequently, after integer rounding, $\mmin$ is weakly decreasing in $\kappa$ and weakly increasing in $\ell$, while $\mmax$ is weakly increasing in $\kappa$, $\alpha$, and $\delta$ and weakly decreasing in $d$ and $L$.
\end{theorem}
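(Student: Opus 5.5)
The plan is direct differentiation of the two closed forms, followed by an observation that monotonicity survives the integer maps used in Corollaries~\ref{prop:mmin} and~\ref{prop:mmax}.

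\textbf{The threshold $\munder$.} Write $\munder=\kappa/(\kappa-\ell)$. This is well defined and positive because $0\le\ell<\kappa$. Differentiating gives
\[
\partial\munder/\partial\kappa=-\ell/(\kappa-\ell)^2\le0,
\]
which is strict exactly when $\ell>0$. Similarly,
\[
\partial\munder/\partial\ell=\kappa/(\kappa-\ell)^2>0.
\]

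\textbf{The threshold $\mbar$.} I will use the chain rule through $C$. Since
\[
\frac{d\mbar}{dC}=\frac{1}{\sqrt{1+4C}}>0,
\]
every sign statement for $\mbar$ reduces to the sign of the corresponding derivative of $C$. These are immediate in most cases: $C$ is linear and increasing in $\kappa$ and in $\alpha$, and $\partial C/\partial d=-C/d<0$. Here positivity of $C$ uses $\alpha>0$, $\kappa>0$ and $\delta\in(0,1)$.

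For $\delta$, write $C=\frac{\alpha\kappa}{d}\,g(\delta^L)$ with $g(x)=x/(1-x)$. The function $g$ is strictly increasing on $(0,1)$, and $\delta\mapsto\delta^L$ is strictly increasing. Hence $\partial C/\partial\delta>0$; explicitly, $\partial C/\partial\delta=\frac{\alpha\kappa}{d}\,L\delta^{L-1}/(1-\delta^L)^2$.

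For $L$, which is discrete, no derivative is needed. Because $\delta\in(0,1)$ gives $\delta^{L+1}<\delta^L$, and $g$ is strictly increasing, we get $C(L+1)<C(L)$. Therefore $\mbar(L+1)<\mbar(L)$.

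\textbf{Passing to the integer thresholds.} By Corollary~\ref{prop:mmin}, $\mmin=\max\{2,\lceil\munder\rceil\}$. Both $\lceil\cdot\rceil$ and $x\mapsto\max\{2,x\}$ are weakly increasing, so strict monotonicity of $\munder$ becomes weak monotonicity of $\mmin$, with the same direction. By Corollary~\ref{prop:mmax}, $\mmax=\lfloor\mbar\rfloor$, and $\lfloor\cdot\rfloor$ is weakly increasing, so the same reasoning applies to $\mmax$.

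I expect no real obstacle. The only points needing care are the following:
\begin{itemize}
\item \emph{Where the hypotheses enter.} The domain restrictions $\ell<\kappa$ and $\alpha>0$ are what make the signs strict, so I will state explicitly where each is used.
\item \emph{The parameter $L$.} Because $L$ is an integer, I will handle it by a discrete comparison rather than by a derivative.
\item \emph{Rounding.} I will note that rounding can only weaken strict inequalities to weak ones; it never reverses them.
\end{itemize}
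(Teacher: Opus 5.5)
Your proposal is correct and follows essentially the same route as the paper. Both arguments differentiate $\munder$ directly, use the chain rule through $C$ with $d\mbar/dC=1/\sqrt{1+4C}>0$, apply the strictly increasing map $x\mapsto x/(1-x)$ to $\delta^L$ for both the $\delta$-derivative and the discrete comparison in $L$, and pass to $\mmin$ and $\mmax$ through the weakly increasing maps $\max\{2,\lceil\cdot\rceil\}$ and $\lfloor\cdot\rfloor$. One small wording point: when $\ell=0$, $\munder$ is constant in $\kappa$ rather than strictly monotone, but the rounding step only needs weak monotonicity.
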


\begin{corollary}[\textbf{Stronger fixed-cost economies widen the viable interval}]
\label{cor:kappa}
For fixed $\ell,d,\alpha,\delta,L$, increasing $\kappa$ weakly lowers $\mmin$ and weakly raises $\mmax$. Thus stronger fixed-cost economies expand the set of constituency sizes that are simultaneously individually attractive and accountable under exit.
\end{corollary}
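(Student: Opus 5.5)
This follows directly from Theorem~\ref{thm:comparative} together with the closed forms in Corollaries~\ref{prop:mmin} and~\ref{prop:mmax}; the only real work is passing from the continuous thresholds to the integer ones. The plan is to fix $\ell,d,\alpha,\delta,L$ and compare two values $\kappa<\kappa'$. Two cases need separate handling. If $\ell<\kappa$, then $\munder$ is well defined at both values, since $\ell<\kappa<\kappa'$. If $\ell\ge\kappa$, then by Corollary~\ref{prop:mmin} no finite constituency is individually rational at $\kappa$, so any finite $\mmin$ at $\kappa'$ is trivially an improvement; I will state this as the convention $\mmin(\kappa)=+\infty$.

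\textbf{Lower threshold.} In the main case, Theorem~\ref{thm:comparative} gives $\partial\munder/\partial\kappa\le0$; directly, $\partial\munder/\partial\kappa=-\ell/(\kappa-\ell)^2\le 0$. Hence $\munder(\kappa')\le\munder(\kappa)$. The map $x\mapsto\max\{2,\lceil x\rceil\}$ is weakly increasing, so by \eqref{eq:mmin} we get $\mmin(\kappa')\le\mmin(\kappa)$.

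\textbf{Upper threshold.} Here $C=\alpha\kappa\delta^L/(d(1-\delta^L))$ is strictly increasing in $\kappa$, and $\mbar$ is strictly increasing in $C$. So $\mbar(\kappa')>\mbar(\kappa)$, and since the floor function is weakly increasing, \eqref{eq:mmax} gives $\mmax(\kappa')\ge\mmax(\kappa)$. If $\alpha=0$ is allowed, then $C=0$, $\mbar=1$ and $\mmax=1$ for every $\kappa$, and the statement holds with equality.

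\textbf{Interval expansion.} The final sentence follows from set inclusion. By Proposition~\ref{prop:general-mmin} the individually rational sizes form the upward-closed set $\{m\ge\mmin\}$. By Proposition~\ref{prop:general-accountability} the accountable sizes form the downward-closed set $\{m\le\mmax\}$. Their intersection is therefore the integer interval $[\mmin,\mmax]$, and it is nested-increasing in $\kappa$ because the left endpoint weakly falls and the right endpoint weakly rises.

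The only delicate point is that strict monotonicity of $\munder$ and $\mbar$ does not survive rounding, which is why the conclusion is stated weakly, together with the boundary regime $\ell\ge\kappa$. Neither requires more than the monotonicity of the rounding maps and the convention above.
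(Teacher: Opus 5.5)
Your proof is correct and follows the paper's route. The corollary is read off Theorem~\ref{thm:comparative}, whose proof signs $\partial\munder/\partial\kappa$ and $\partial\mbar/\partial\kappa$ and then notes that the integer thresholds inherit weak monotonicity through the monotone rounding maps; your treatment of the cases $\ell\ge\kappa$ and $\alpha=0$ (which the theorem's hypotheses exclude) and your nesting argument for $[\mmin,\mmax]$ via Propositions~\ref{prop:general-mmin} and~\ref{prop:general-accountability} are sound additions that the paper leaves implicit.
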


\begin{corollary}[\textbf{More reliable exit supports accountability at larger scales}]
\label{cor:alpha}
For fixed $\kappa,d,\delta,L$, $\mmax$ is weakly increasing in $\alpha$. If $\alpha=0$, individual exit supplies no expected punishment and no nontrivial constituency is accountable through this mechanism alone.
\end{corollary}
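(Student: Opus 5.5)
The statement to prove is Corollary~\ref{cor:alpha}, which has two parts: monotonicity of $\mmax$ in $\alpha$, and the degenerate case $\alpha=0$. I will derive both directly from Proposition~\ref{prop:general-accountability} and Corollary~\ref{prop:mmax}, handling them separately.

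\textbf{Monotonicity for $\alpha>0$.} Fix $\kappa,d,\delta,L$ and take $0<\alpha_1<\alpha_2\le1$. Under fixed costs $D(m)=\kappa/(m(m-1))\ge0$, so the left-hand side of \eqref{eq:IC-general}, $\alpha\delta^L D(m)$, is weakly increasing in $\alpha$ for every $m\ge2$. The right-hand side, $d(1-\delta^L)$, does not depend on $\alpha$. Hence any $m$ that satisfies the condition at $\alpha_1$ also satisfies it at $\alpha_2$. The set of accountable sizes therefore weakly grows with $\alpha$, and by the definition \eqref{eq:mmax-general} its maximum $\mmax$ weakly grows too.

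Equivalently, one can use Theorem~\ref{thm:comparative}. There $\partial\mbar/\partial\alpha>0$, because $C$ is linear and increasing in $\alpha$. Since $\mmax=\lfloor\mbar\rfloor$ by \eqref{eq:mmax}, and $\lfloor\cdot\rfloor$ is weakly increasing, $\mmax$ is weakly increasing in $\alpha$. The only care needed is that Theorem~\ref{thm:comparative} assumes $\alpha>0$, so the endpoint $\alpha=0$ must be treated separately.

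\textbf{The case $\alpha=0$.} Here the left-hand side of \eqref{eq:IC-general} equals $0$ for every $m\ge2$. The right-hand side satisfies $d(1-\delta^L)>0$, since $d>0$, $\delta\in(0,1)$ and $L\ge1$. So no $m\ge2$ satisfies the condition, the set in \eqref{eq:mmax-general} reduces to $\{1\}$, and $\mmax=1$. Consistently, $C=0$ gives $\mbar=1$ and $\lfloor\mbar\rfloor=1$.

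Because $\mmax\ge1$ always holds, the value at $\alpha=0$ is no larger than the value at any $\alpha>0$. Monotonicity therefore extends to the whole interval $[0,1]$.

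Interpretively, with $\alpha=0$ the center's expected continuation loss from exploiting is $\alpha\delta^L D(m)=0$. The deviation gain is strictly positive, so exploitation is strictly profitable and no nontrivial constituency is accountable through individual exit alone.

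The main point requiring attention is the boundary $\alpha=0$, where Theorem~\ref{thm:comparative} does not apply. The direct set-inclusion argument via \eqref{eq:IC-general} handles it without needing the derivative.
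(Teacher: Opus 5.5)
Your proposal is correct and essentially matches the paper, which gives no separate proof and treats this corollary as an immediate consequence of Theorem~\ref{thm:comparative} (monotonicity of $\mbar$ in $\alpha$, with $C=0$ and hence $\mbar=1$, $\mmax=1$ at $\alpha=0$), exactly as in your second route. Your direct set-inclusion argument via \eqref{eq:IC-general} is a useful refinement: it covers the endpoint $\alpha=0$, which the theorem's hypothesis $0<\alpha\le1$ formally excludes, and it needs only $D(m)\ge0$ rather than the fixed-cost form of $D(m)$.
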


\begin{corollary}[\textbf{Long lock-in weakens accountability}]
\label{cor:lockin}
For fixed $\kappa,d,\alpha,\delta$, $\mmax$ is weakly decreasing in governance-epoch length $L$. Moreover, for $0<\delta<1$, $\mbar\to1$ as $L\to\infty$.
\end{corollary}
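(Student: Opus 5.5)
The first claim, that $\mmax$ is weakly decreasing in $L$, comes directly from Theorem~\ref{thm:comparative}. There we have $\mbar(L+1)<\mbar(L)$, and $\mmax=\lfloor\mbar\rfloor$ by Corollary~\ref{prop:mmax}. Since the floor function is weakly increasing, $\mmax(L+1)\le\mmax(L)$ for every $L\ge1$, and induction extends this to all $L'>L$. To keep the argument self-contained, I will also recall why $\mbar$ falls with $L$. Write
\[
C(L)=\frac{\alpha\kappa}{d}\cdot\frac{\delta^L}{1-\delta^L}=\frac{\alpha\kappa}{d}\cdot\frac{1}{\delta^{-L}-1}.
\]
For $0<\delta<1$, $\delta^{-L}$ is strictly increasing in $L$, so $C(L)$ is strictly decreasing whenever $\alpha>0$. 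Because $\mbar=\tfrac{1}{2}\bigl(1+\sqrt{1+4C}\bigr)$ is strictly increasing in $C$, $\mbar$ is strictly decreasing in $L$. If $\alpha=0$, then $C\equiv0$ and $\mbar\equiv1$, so weak monotonicity holds trivially.

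For the limit, $\delta^L\to0$ as $L\to\infty$ because $0<\delta<1$. Hence $1-\delta^L\to1$ and $C(L)\to0$ for any fixed $\alpha,\kappa,d$. Since $x\mapsto\sqrt{1+4x}$ is continuous at $0$, we get $\mbar(L)\to\tfrac{1}{2}(1+1)=1$.

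As a remark, this also gives the integer statement $\mmax=1$ for all sufficiently large $L$. Indeed, $\mbar<2$ is equivalent to $\sqrt{1+4C}<3$, that is, $C<2$, which holds eventually. So for long enough lock-in no nontrivial constituency is accountable, which matches the last line of Corollary~\ref{prop:mmax}.

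No step here is a real obstacle. The only care needed is with the boundary case $\alpha=0$ and with keeping monotonicity strict before the floor and weak after it. The rest is routine monotonicity and continuity.
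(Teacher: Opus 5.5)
Your proof is correct and follows the paper's argument: the monotonicity of $\mmax$ in $L$ is inherited through the floor from $\mbar(L+1)<\mbar(L)$ in Theorem~\ref{thm:comparative}, and the limit follows from $\delta^L\to0$, so that $C\to0$ and $\mbar\to1$. Two of your additions go beyond what the paper spells out, and both are correct: the separate treatment of $\alpha=0$, which the theorem's hypotheses exclude, and the remark that $\mmax=1$ once $C<2$.
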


\begin{table}[t]
\caption{Comparative statics for the fixed-cost participation and accountability thresholds.}
\label{tab:predictions}
\centering
\small
\begin{tabular}{@{}p{0.14\columnwidth}ccp{0.50\columnwidth}@{}}
\toprule
Increase & $m_{\min}$ & $m_{\max}$ & Predicted institutional effect \\
\midrule
$\kappa$ & $\downarrow$ & $\uparrow$ & Delegation viable over a wider size range \\
$\ell$ & $\uparrow$ & -- & Harder to induce delegation \\
$d$ & -- & $\downarrow$ & Stronger agency problem; smaller accountable centers \\
$\alpha$ & -- & $\uparrow$ & More reliable detection and exit strengthen accountability \\
$\delta$ & -- & $\uparrow$ & Future exit matters more \\
$L$ & -- & $\downarrow$ & Longer lock-in weakens exit discipline \\
\bottomrule
\end{tabular}
\end{table}

\subsection{The Emergence of Polycentric Governance}
The two thresholds $\mmin$ and $\mmax$ characterize constituency sizes that are simultaneously individually attractive and accountable. Importantly, the following result depends only on the existence of these two thresholds; it is therefore not specific to the fixed-cost specialization.

\begin{definition}[\textbf{Fully delegated accountable partition}]
A partition of $n$ agents into $q$ constituencies with integer sizes $m_1,\ldots,m_q$ is \emph{fully delegated and accountable} if every constituency is nontrivial and satisfies
\[
\mmin\le m_j\le\mmax \qquad\text{for all }j=1,\ldots,q.
\]
\end{definition}

\begin{theorem}[\textbf{Feasible number of accountable centers}]
\label{thm:polycentric}
Suppose $2\le\mmin\le\mmax$. A fully delegated accountable partition of $n$ agents into exactly $q$ constituencies exists if and only if
\begin{equation}
\boxed{\left\lceil\frac{n}{\mmax}\right\rceil\le q\le\left\lfloor\frac{n}{\mmin}\right\rfloor}
\label{eq:qinterval}
\end{equation}
Consequently, if the interval in~\eqref{eq:qinterval} is nonempty and $n>\mmax$, every fully delegated accountable governance structure is polycentric, with at least
\begin{equation}
q_{\min}=\left\lceil\frac{n}{\mmax}\right\rceil\ge2
\end{equation}
centers.
\end{theorem}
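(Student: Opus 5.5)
The proof splits into the two directions of the equivalence plus a short derivation of the polycentricity consequence. Throughout I use only that $2\le\mmin\le\mmax$ are integers. By the definition of a fully delegated accountable partition, the question is purely combinatorial: when can $n$ be written as a sum of exactly $q$ integers, each in $[\mmin,\mmax]$?

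\emph{Necessity.} Suppose $m_1+\dots+m_q=n$ with $\mmin\le m_j\le\mmax$ for every $j$. Summing these bounds gives
\[
q\,\mmin\le n\le q\,\mmax,
\]
so $q\ge n/\mmax$ and $q\le n/\mmin$. Because $q$ is an integer, this yields $\lceil n/\mmax\rceil\le q\le\lfloor n/\mmin\rfloor$.

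\emph{Sufficiency.} Take an integer $q$ in the interval. Then $q\mmin\le n\le q\mmax$, since $q\ge n/\mmax$ and $q\le n/\mmin$. I would build the partition explicitly with a balanced split: set $a=\lfloor n/q\rfloor$ and $r=n-qa\in\{0,\dots,q-1\}$, and assign size $a+1$ to $r$ constituencies and size $a$ to the remaining $q-r$. These sizes sum to $n$, so it remains to check that all of them lie in $[\mmin,\mmax]$.
\begin{itemize}
\item Lower end: from $n/q\ge\mmin$ and the integrality of $\mmin$ we get $a\ge\mmin$.
\item Upper end: from $n/q\le\mmax$ we get $a\le\mmax$. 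If $r>0$, then $n/q$ is not an integer, so $a<n/q\le\mmax$, which gives $a+1\le\mmax$.
\end{itemize}
Every part is also nontrivial, because $\mmin\ge2$. The step needing the most care is exactly this integrality bookkeeping at the upper end. The alternative I would keep in reserve is an incremental argument: start from all parts equal to $\mmin$, then add the surplus $n-q\mmin\le q(\mmax-\mmin)$ one unit at a time to parts still below $\mmax$.

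\emph{Polycentricity.} Assume the interval is nonempty and $n>\mmax$. Then $n/\mmax>1$, so $q_{\min}=\lceil n/\mmax\rceil\ge2$. By necessity, every fully delegated accountable partition has $q\ge q_{\min}\ge2$ centers, so it is polycentric. By sufficiency, the bound $q_{\min}$ is attained, because $q_{\min}$ lies in the nonempty interval.
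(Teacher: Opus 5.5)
Your proof is correct. The necessity direction and the polycentricity consequence match the paper's argument step for step. The only real difference is in sufficiency.

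The paper uses exactly the incremental argument you kept in reserve. It assigns $\mmin$ agents to each of the $q$ constituencies. It then distributes the remaining $r=n-q\mmin$ agents one at a time, relying only on the aggregate slack inequality $q(\mmax-\mmin)\ge r$, which follows from $q\mmax\ge n$.

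Your balanced split $a=\lfloor n/q\rfloor$ instead produces a specific closed-form partition whose sizes differ by at most one. It costs the small integrality check you flagged: $r>0$ forces $a<n/q\le\mmax$, hence $a+1\le\mmax$. In return it gives a canonical witness: whenever any accountable partition into $q$ parts exists, the most equal such partition is accountable. The paper's fill avoids floor arithmetic entirely and does not single out any particular partition.

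Your closing remark that $q_{\min}$ is actually attained, because it lies in the nonempty interval, is a small addition the paper leaves implicit.
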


\begin{corollary}[\textbf{Primitive condition for necessary polycentricity under fixed costs}]
\label{cor:primitive-polycentricity}
Under the fixed-cost specialization, define
\[
\mmin=\max\left\{2,\left\lceil\frac{\kappa}{\kappa-\ell}\right\rceil\right\},
\qquad
\mmax=\left\lfloor\frac{1+\sqrt{1+4\frac{\alpha\kappa\delta^L}{d(1-\delta^L)}}}{2}\right\rfloor.
\]
If $\mmin\le\mmax<n$ and $\lceil n/\mmax\rceil\le\lfloor n/\mmin\rfloor$, then every fully delegated accountable governance structure is necessarily polycentric.
\end{corollary}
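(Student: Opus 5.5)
The corollary follows by specializing Theorem~\ref{thm:polycentric}: we verify its hypotheses under the fixed-cost specialization and read off its conclusion. I first justify the two thresholds as written. By Corollary~\ref{prop:mmin}, when $0\le\ell<\kappa$ the minimum individually rational nontrivial size is $\max\{2,\lceil\kappa/(\kappa-\ell)\rceil\}$. By Proposition~\ref{prop:general-mmin}, the individually rational sizes are upward closed, so a size $m\ge2$ is individually rational exactly when $m\ge\mmin$. By Corollary~\ref{prop:mmax}, which rests on Proposition~\ref{prop:general-accountability}, the accountable sizes are downward closed, with largest element $\lfloor\mbar\rfloor$. Since $D(m)=\kappa/(m(m-1))$ is strictly decreasing, a size $m\ge2$ is accountable exactly when $m\le\mmax$.

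The implicit requirement $\ell<\kappa$ needs a remark. If $\ell\ge\kappa$, the formula for $\mmin$ is undefined, so the hypothesis $\mmin\le\mmax$ can only be read with $\ell<\kappa$. I will state that the corollary is understood in that regime, where Corollary~\ref{prop:mmin} applies. With this in place, a partition is fully delegated and accountable in the sense of the definition exactly when every block size lies in $[\mmin,\mmax]$.

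Next I check the hypotheses of Theorem~\ref{thm:polycentric}. The condition $2\le\mmin$ holds by construction of $\mmin$, and $\mmin\le\mmax$ is assumed. The theorem then gives that a fully delegated accountable partition into exactly $q$ blocks exists if and only if $\lceil n/\mmax\rceil\le q\le\lfloor n/\mmin\rfloor$. The assumed inequality $\lceil n/\mmax\rceil\le\lfloor n/\mmin\rfloor$ makes this interval nonempty, so such structures exist and the conclusion is not vacuous. Since $\mmax<n$, the theorem's consequence clause applies. Every fully delegated accountable structure has $q\ge\lceil n/\mmax\rceil$, and $n/\mmax>1$ gives $q\ge2$. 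Hence every such structure has at least two centers, i.e.\ it is polycentric.

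If one prefers not to lean on the consequence clause, the lower bound is direct. Any partition into blocks of size at most $\mmax$ satisfies $n=\sum_j m_j\le q\,\mmax$, so $q\ge n/\mmax>1$, and since $q$ is an integer, $q\ge2$.

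The only delicate point is the bookkeeping in the previous steps: identifying the boxed closed forms with the abstract thresholds of Propositions~\ref{prop:general-mmin} and~\ref{prop:general-accountability}, including the edge case $\mmax=1$. That case is excluded here because $\mmax\ge\mmin\ge2$. Everything else is immediate.
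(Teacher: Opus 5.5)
Your proposal is correct and follows the paper's route: the corollary is Theorem~\ref{thm:polycentric} with the fixed-cost thresholds of Corollaries~\ref{prop:mmin} and~\ref{prop:mmax} substituted, and your direct bound $n\le q\mmax$ is the same counting step used in the paper's proof of that theorem. One small correction to your side remark: for $\ell>\kappa$ the formula is not undefined (only $\ell=\kappa$ is), since $\kappa/(\kappa-\ell)<0$ makes it return $\mmin=2$ even though no size is individually rational, so restricting to $\ell<\kappa$ is needed for the formula to represent the participation threshold, whereas the polycentricity conclusion uses only $\mmax<n$ and holds regardless.
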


\paragraph{\textbf{Takeaway}} 

Theorem~\ref{thm:polycentric} is a feasibility and necessity result, not an equilibrium-selection claim. It identifies a region, under benchmark conditions, in which \emph{economies of scale rule out very small delegated institutions} while \emph{accountability rules out a grand center}. Whether decentralized strategic agents discover and sustain such
structures once formation and replacement evolve endogenously is the question
we examine experimentally.

\subsection{Structural Polycentricity vs. Behavioural Diversity} \label{sec:Structural_Behavioural_Diversity}

Our theory characterizes when accountable delegation requires authority to be distributed across multiple centers. Such \emph{structural polycentricity}, however, does not necessarily imply diversity in how those centers behave. We therefore distinguish between structural polycentricity, which describes the distribution of authority across centers, and \emph{behavioural diversity}, which describes whether distinct centers make different decisions when facing comparable strategic conditions. In simple terms, a system can be highly polycentric in its formal organization while remaining behaviourally concentrated if its centers respond similarly to the same situations. This distinction is particularly relevant when centers are controlled by LLMs. Recent work documents substantial intra-model repetition and inter-model homogeneity in LLM responses to open-ended prompts \cite{jiang2026artificial}. Although these findings concern generative behaviour rather than strategic governance, they raise the possibility that formally independent LLM centers may nevertheless respond similarly to the same incentives. A governance system may therefore be highly polycentric in its institutional structure while remaining behaviourally concentrated if its centers repeatedly make similar exploitation decisions or similar mistakes. Our theoretical results do not require behavioural heterogeneity across centers, but this distinction motivates our empirical examination of how language-model agents exercise authority under comparable strategic conditions, including whether different models exhibit similar exploitation decisions and reasoning patterns.

%%%%%%%%%%%%%%%%%%%%%%%%%%%%%%%%%%%%%%%%%%%%%%%%%%%%%%%%%%%%%%%%%%%%%%%%%%%%%%%%%%%%%%%%%%%
%%%%%%%%%%%%%%%%%%%%%%%%%%%%%%%%%%%%%%%%%%%%%%%%%%%%%%%%%%%%%%%%%%%%%%%%%%%%%%%%%%%%%%%%%%%

\section{Polycentricity in Environments of AI Agents}

In this section, we study endogenous governance formation among large language model (LLM) agents. Our objective is to study whether polycentricity emerges naturally via their interactions, and whether the experimental results echo those of the theory developed in Section~\ref{sec:theory-model}. We start by describing the design of our simulation.

\subsection{Simulation Design} \label{sec:environment}

\paragraph{\textbf{Reduced-form Environment}}

The simulations implement the reduced-form game of Section~\ref{sec:theory-model}, with perfect detection and compulsory exit following exploitation ($\alpha=1$). Agents can establish new centres, step down as centres, and voluntarily join or leave constituencies at governance transitions. At each transition, exploited delegates first leave their centres. Existing centres then decide whether to continue, remaining delegates decide whether to stay, and free agents may volunteer to become new centres. If multiple agents volunteer, one is selected uniformly at random. Once the set of centres is determined, free agents choose between individual control and delegation to an eligible centre. A departing delegate may volunteer during the same transition.

Agents know the governance rules and observe the public constituency structure and their own actions and outcomes, but not other agents' private exploitation outcomes or reasoning. Each decision is accompanied by a history of the agent's observations, organized by governance epoch. Model-generated explanations are recorded for analysis but are neither shared with other agents nor included in subsequent memory.

\paragraph{\textbf{Matched Initialization Experiment}}

We ran five matched pairs of simulations, each with $n=30$ GPT-6 Luna agents. Each pair compared two starting conditions. In the \emph{centralized initialization}, one randomly selected agent began as the centre of a group containing all 30 agents. In the \emph{decentralized initialization}, all agents began independently and could volunteer to become a centre. One volunteer was selected at random, and the other agents could choose whether to join. After initialization, both conditions followed the same governance rules described in the previous paragraph.

The two simulations in each pair shared the same random draws but used separate model calls and memories. Each pair used a different random seed. We set $(\kappa,\ell,d,\delta,L)=(1,3/5,2/5,99/100,5),$
% \[
% (\kappa,\ell,d,\delta,L)=(1,3/5,2/5,99/100,5),
% \]
which gives the theoretical benchmark interval $3\leq m\leq7$ for $\alpha=1$, according to Corollaries~\ref{prop:mmin} and~\ref{prop:mmax}. We use this interval to classify constituency sizes. Appendix~\ref{app:parameter-sensitivity} has a detailed parameter sweep.

If an agent leaves a centre (e.g., due to exploitation), it cannot immediately rejoin. Before each later affiliation decision, the option to rejoin becomes available with probability $1/2$ and remains available unless the agent rejoins and leaves again. Rejoining is always voluntary. Agents see which centres they can join, but do not know the reinstatement rule or its random outcomes. This mechanism captures membership renewal in a finite population of persistent agents. Without it, permanent restrictions on rejoining would gradually limit agents' choices and could cause fragmentation unrelated to their strategic decisions. In real organizational settings, populations evolve as new members arrive and former members reconsider their affiliations. Probabilistic reinstatement provides an approximation of such membership renewal without introducing explicit population turnover, allowing constituencies to recover, reorganize, and compete for members over repeated governance epochs.

Each simulation contains 20 complete epochs. All agents received the same rules and one of five decision prompts: centre continuation, delegate release, volunteering, affiliation, or exploitation. Each prompt included the agent's current observations, available actions, and private memory. We used GPT-6 Luna with medium reasoning effort and required agents to provide both a structured decision and a private explanation. All prompts are provided in Appendix~\ref{app:agent-interface}, and the source code can be found in the supplement.

\paragraph{\textbf{Outcome Measures and Point-wise Comparison}}

We call a constituency \emph{viable} when $\mmin\leq m\leq \mmax$,\footnote{This classification identifies constituency sizes satisfying the analytical bounds, rather than establishing that the centre behaves non-exploitatively.} and an epoch \emph{polycentric} when at
least two viable constituencies coexist.
For each operational round, we record each agent's constituency size and whether they were exploited. To measure how cooperation gains are distributed, we compare each agent's payoff with the independent-control payoff of $-\kappa$ and attribute the difference to centres or delegates according to their role.

For the pointwise comparison, Sol evaluated each post-initialization Luna decision state using the same rules, observations, available actions, and history, without seeing Luna's decision or reasoning. Sol's responses did not affect the Luna trajectories. We compared 4,551 volunteering decisions and 1,988 exploitation decisions across the two models. The five random seeds were independent, but decisions within each simulation were not; we therefore report the results descriptively. This comparison captures differences between the models in identical decision states, rather than outcomes from independently evolving Sol societies.

%%%%%%%%%%%%%%%%%%%%%%%%%%%%%%%%%%%%%%%%%%%%%%%%%%%%%%%%%%%%%%%%%%%%%%%%%%%%%%%%%%%%%%%%%%%
%%%%%%%%%%%%%%%%%%%%%%%%%%%%%%%%%%%%%%%%%%%%%%%%%%%%%%%%%%%%%%%%%%%%%%%%%%%%%%%%%%%%%%%%%%%

\section{Simulation Results} \label{sec:results}

\begin{enumerate}[leftmargin=*]
    \item Luna agents repeatedly formed polycentric arrangements under both initialization conditions, but \textbf{constituency membership remained fluid} as fee-sharing incentives encouraged concentration and exploitation induced departures.
    \item Although delegation generated organizational cost savings, \textbf{centres captured most of these gains}. This concentration was particularly pronounced under centralized initialization, where exploitation left delegates collectively worse off than under individual control.
    \item Matched-state comparisons show that Luna and Sol were similarly willing to volunteer for leadership but \textbf{differed substantially in their exploitation decisions} and in the considerations emphasized in their explanations.
\end{enumerate}

\begin{figure}[!t]
  \centering
  \includegraphics[width=\linewidth]{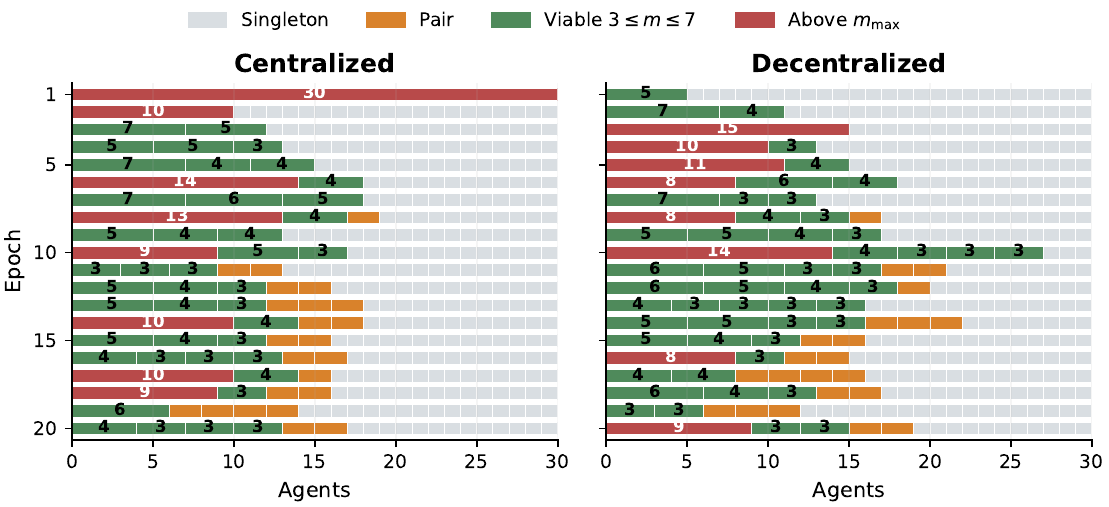}
  \Description{Six panels show complete constituency partitions over twenty epochs for three
  matched repeats. Centralized episodes occupy the first row and decentralized episodes the second.
  Colours distinguish singleton, size-two, viable, and oversized constituencies.}
  \caption{Constituency partitions at the start of each epoch. Left: centralized initialization, Right: decentralized initialization Each panel partitions all 30 agents; labels show constituency sizes of at least three. Multiple viable constituencies formed, dissolved, and re-formed under both initialization conditions. Appendix~\ref{app:all-partitions} shows all five repeats.}
  \label{fig:partitions}
\end{figure}

\subsection{Polycentricity Emerges but Remains Fluid}

Polycentricity emerged through both formation and fragmentation. All ten trajectories became polycentric, in all ten episodes and in 117 of the 200 observed epochs. This occurred under both centralized (55/100 epochs) and decentralized (62/100 epochs) initialization. Each trajectory was polycentric in 8--15 of its 20 epochs. The routes to polycentricity differed markedly for each of the two initialization conditions (Figure~\ref{fig:partitions}). Across both conditions, constituencies repeatedly expanded beyond the analytical accountability bound, lost members, and re-formed around existing or new centres. Thus,the agents did not simply select and maintain a partition within the theoretically compatible interval; polycentricity emerged through an process of concentration, exploitation, exit, and reconstruction.

Table~\ref{tab:initialization-results} compares the two initialization conditions across the full trajectories. Decentralized initialization produced slightly more polycentric epochs (62\% versus 55\%) and greater agent-round exposure to constituencies within the theoretical size interval (29.2\% versus 24.6\%). More importantly, exploitation exposure after the first epoch was lower (42.2\% versus 52.9\%). These descriptive differences suggest that the initial distribution of authority may influence subsequent governance dynamics, even though both conditions generated repeated formation and fragmentation of constituencies.

Exploitation often triggered the departure of targeted delegates, however these losses were not necessarily permanent: centres could attract new delegates, and departed agents could regain the opportunity to rejoin. Indeed, some centre explanations \emph{explicitly anticipated membership recovery when deciding whether to exploit}.

Every trajectory remained polycentric for at least two consecutive epochs, and one did so for ten. Some constituencies even grew to 14 agents, way above the theoretical bound. Nevertheless, larger constituencies subsequently fragmented, while smaller constituencies repeatedly lost members to competing centres. The observed dynamics were therefore characterized by recurring concentration, exploitation, exit, and re-formation, rather than convergence to a persistent governance structure.

\begin{table}[t]
\caption{Polycentricity and exploitation by initialization condition, aggregated over five matched pairs of 30-agent simulations (20 governance epochs per episode). An epoch is classified as polycentric when at least two constituencies have sizes within the theoretical benchmark interval $3\le m\le7$. Exploitation rates are calculated in epochs 2--20 to avoid the distinct initialization epoch.}
\label{tab:initialization-results}
\centering
\resizebox{\columnwidth}{!}{%
\begin{tabular}{lrr}
\toprule
Outcome & Centralized & Decentralized \\
\midrule
Polycentric epochs & $55/100$ & $62/100$ \\
Agent-rounds within theoretical bounds & $24.6\%$ & $29.2\%$ \\
Exploited delegate-rounds after epoch 1 & $52.9\%$ & $42.2\%$ \\
\bottomrule
\end{tabular}}
\end{table}

\paragraph{\textbf{Centres Captured Most of the Surplus}}

Although fee sharing creates aggregate gains relative to individual control, exploitation can redistribute these gains from delegates to centres. To examine who benefits from delegation, we measured each agent's payoff gain relative to independent control and aggregate these gains separately for centres and delegates.

For the centralized initialization, centres captured $104.4\%$ of aggregate delegation surplus, leaving delegates with a negative share of $-4.4\%$. A share exceeding $100\%$ is possible -- i.e., delegates can be collectively worse off than under independent control -- because exploitation transfers payoff from delegates to centres. Delegate surplus was negative in four of the five centralized repeats. For the decentralized initialization, centres captured $91.9\%$ of the surplus, while delegates retained $8.1\%$, with positive delegate surplus in all five repeats. Thus, although centres received most of the organizational gains under both conditions, the distribution of these gains differed based on initial governance structures (see Appendix~\ref{app:delegation-surplus}).

\subsection{Exercise of Authority}

We next investigate whether agents' governance decisions differ across models of varying capabilities. To this end, we compare a higher-capacity model (GPT-6 Sol) with a smaller model (GPT-6 Luna), evaluating their interest in assuming authority and their exploitation decisions under identical strategic conditions.

\paragraph{\textbf{The models were similarly willing to become centers, but differed substantially in how much they exploited.}} 

In 4,551 matched volunteering states, Luna and Sol volunteered in 96.53\% and 99.17\% of cases, respectively, agreeing in 96.04\% of states. The difference was substantially larger for exploitation. In 1,988 matched centre-decision states, Sol initiated new exploitation\footnote{I.e., initiating exploitation of at least one additional delegate who was not previously being exploited during the current governance epoch.} in 41.80\% of cases, compared with 31.79\% for Luna, \emph{a 10 percentage-points difference}. The models agreed on whether to exploit in 65.74\% of these states. This difference was particularly pronounced in the decentralized-initialization trajectories, where Sol exploited in $44.33\%$ of states compared with $28.11\%$ for Luna (a $16.23$ percentage-point difference). In contrast, their exploitation rates were much closer in the centralized-initialization trajectories ($38.64\%$ versus $36.38\%$, respectively). These results suggest that willingness to assume a leadership role provides little information about how an agent will exercise authority once that role has been acquired.

Figure~\ref{fig:exploitation-models} compares the models' exploitation decisions across constituency sizes. Within the theoretical benchmark interval ($3\le m\le7$), Sol initiated exploitation in $31.46\%$ of matched states, compared with $20.76\%$ for Luna. Above the accountability threshold ($m>7$), these rates increased to $44.00\%$ and $28.00\%$, respectively. Moreover, Sol continued to target a similar proportion of delegates in larger constituencies, whereas Luna targeted a smaller proportion (see also Appendix~\ref{app:exploitation-intensity}).

Furthermore, the models differed not simply in their willingness to exploit, but in \emph{how they assessed its longer-term consequences}. To investigate these differences, we analysed the transcripts using Non-negative Matrix Factorisation (NMF)~\cite{lee1999learning} to identify four recurring themes in the models' written explanations (see Appendix~\ref{app:reasoning-topics}). In the 440 matched states where Sol exploited but Luna did not, Luna placed much greater emphasis on the future fee increases caused by losing delegates (the mean normalized weight assigned to long-term fee loss was $67.3\%$ for Luna and $3.3\%$ for Sol). Conversely, Sol emphasized the possibility that departing constituents would soon be replaced, either by returning members or by new delegates, allowing the centre to maintain its constituency size (64.5\% versus 15.0\% for Luna). These differences relate to accountability: departure disciplines a centre only when the resulting membership loss is sufficiently durable. Luna's explanations primarily represented departure as a future fee cost, whereas Sol's more often represented it as turnover that could be offset by retained or replacement members. Nevertheless, note that these findings describe differences in the models' generated explanations, rather than establishing what caused their decisions.

\begin{figure}[t]
  \centering
  \includegraphics[width=0.7\columnwidth]{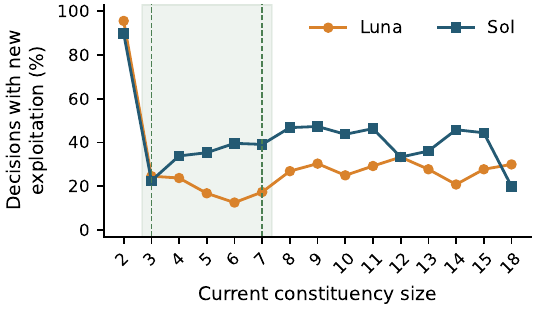}
  \Description{Two line charts compare Luna and Sol exploitation incidence and selected delegate
  shares across constituency sizes.}
  \caption{Post-initialization matched-state comparison of exploitation decisions across constituency sizes. The shaded area corresponds to the theoretical accountable interval $3\le m\le7$. Sol generally initiated exploitation more frequently as compared to Luna.}
  \label{fig:exploitation-models}
\end{figure}

\paragraph{\textbf{Exploitation Scope}}

Sol exploited more frequently, while Luna targeted more delegates when it did exploit. Conditional on initiating exploitation, Luna targeted $81.57\%$ of eligible delegates, compared with $74.25\%$ for Sol. However, because Sol initiated exploitation more frequently, it selected more new targets per decision on average ($1.011$ versus $0.670$). The models also differed in timing (Figure~\ref{fig:exploitation-timing}). Their exploitation rates were similar in the first operational round, but, counter-intuitively, Sol was more likely to initiate exploitation of additional delegates in subsequent rounds, particularly rounds two to four. This is in contrast to Lemma~\ref{lem:timing}, which establishes that exploitation is more beneficial to the center at the beginning of an epoch, since the centre can benefit from it for the greatest number of remaining rounds.

These findings also illustrate the distinction between structural polycentricity and behavioural diversity discussed in Section~\ref{sec:Structural_Behavioural_Diversity}. Even when centres face identical strategic conditions, the model controlling them can influence their exploitation decisions and their assessment of future membership losses. Thus, distributing decision-making power across multiple centres does not, by itself, determine how those centres will behave. Our matched-state comparison demonstrates behavioural differences across model families, although it does not establish whether centres controlled by the same model make correlated decisions.

\begin{figure}[t]
  \centering
  \includegraphics[width=0.8\columnwidth]{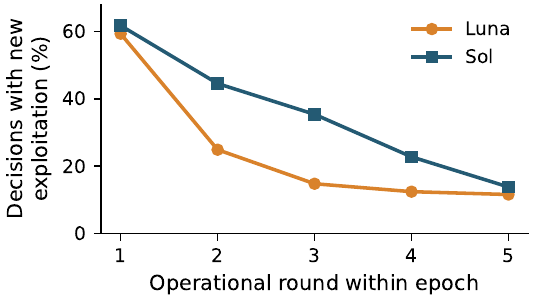}
  \Description{A line chart compares the share of Luna and Sol decisions that initiated new
  exploitation in each of the five operational rounds. The rates are similar in round one, after
  which Sol remains substantially higher through round four.}
  \caption{Timing of new exploitation within a governance epoch. New exploitation refers to targeting at least one delegate who was not previously being exploited during that epoch. Luna and Sol behaved similarly in the first operational round, but Sol initiated new exploitation more frequently in subsequent rounds, particularly rounds two to four.}
  \label{fig:exploitation-timing}
\end{figure}

%%%%%%%%%%%%%%%%%%%%%%%%%%%%%%%%%%%%%%%%%%%%%%%%%%%%%%%%%%%%%%%%%%%%%%%%
%%%%%%%%%%%%%%%%%%%%%%%%%%%%%%%%%%%%%%%%%%%%%%%%%%%%%%%%%%%%%%%%%%%%%%%%

\section{Discussion}

Our theoretical and empirical results highlight a central \emph{tension} in delegated governance: the economies of scale that encourage agents to form larger constituencies can also weaken the accountability of their centres. The theory identifies a minimum size at which delegation becomes individually attractive and a maximum size at which the threat of individual exit can deter exploitation. However, this is a feasibility result, not a prediction that agents will form or maintain such a partition. In fact, in the simulations, while polycentric arrangements emerged under both initialization conditions, constituencies repeatedly grew, fragmented, and re-formed.

A key insight concerns the \emph{robustness of exit as an accountability mechanism}. 
In the simulations, exploited delegates always leave at the next governance transition, but centres may subsequently attract replacement members. Thus, even certain departure does not necessarily provide a strong deterrent when the resulting membership loss may be temporary. This helps explain why exploitation can persist in real life despite compulsory exit and why constituencies may grow beyond the theoretical accountability threshold.

Our results also distinguish the \emph{efficiency of delegation} from the \emph{distribution of its benefits}. Sharing organizational costs generates aggregate gains, but exploitation allows centres to appropriate those gains at their delegates' expense. Indeed, following centralized initialization, delegates were collectively worse off than under independent control, despite positive aggregate delegation surplus. Delegates fared better under decentralized initialization, although centres still captured most of the gains. These findings suggest that the emergence of collective institutions, or even multiple competing institutions, is not by itself sufficient evidence that delegation benefits those who surrender control.

Finally, the comparison between Luna and Sol illustrates one element of why structural polycentricity should be distinguished from behavioural diversity. The models were similarly willing to become centres but differed substantially in their exploitation decisions, even when presented with identical strategic states. The behaviour of the models exercising authority matters alongside the institutional structure in which they operate.

\begin{figure}[t]
  \centering
  \includegraphics[width=0.9\linewidth]{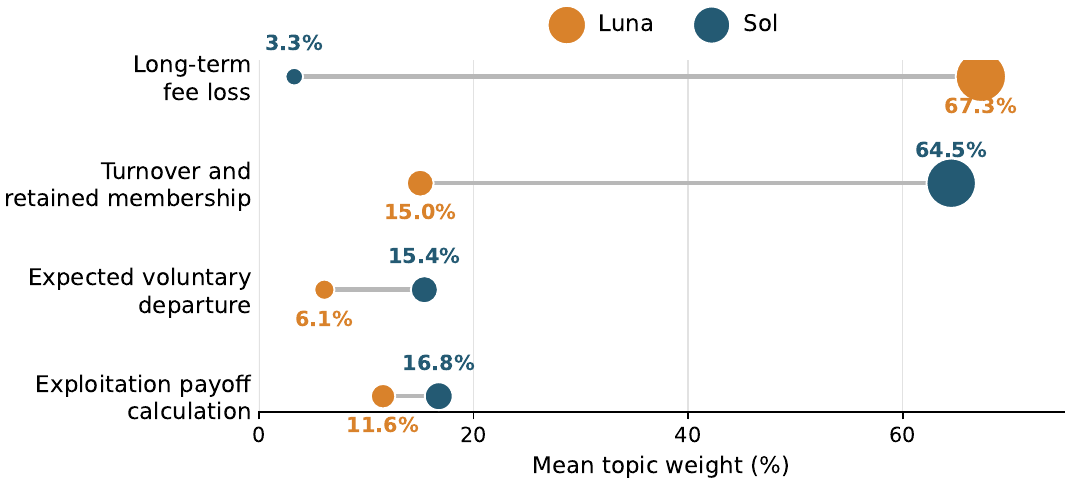}
  \Description{Comparison of reason in the cases
  where Sol exploited and Luna did not. Luna strongly emphasizes long-term fee loss, while Sol
  strongly emphasizes turnover and retained membership.}
  \caption{Themes in model-generated explanations for matched states where Sol initiated exploitation but Luna did not. Position and marker size represent each model's mean topic weight. Luna emphasized long-term fee losses, whereas Sol emphasized membership turnover and retained constituency size.}
  \label{fig:reason-topics}
\end{figure}

%%%%%%%%%%%%%%%%%%%%%%%%%%%%%%%%%%%%%%%%%%%%%%%%%%%%%%%%%%%%%%%%%%%%%%%%
%%%%%%%%%%%%%%%%%%%%%%%%%%%%%%%%%%%%%%%%%%%%%%%%%%%%%%%%%%%%%%%%%%%%%%%%

\section{Conclusion}

We studied how economies of scale and exit-based accountability shape voluntary delegation to self-interested centres. Our theory identifies the constituency sizes at which delegation is attractive and exploitation can be deterred, and characterizes when fully delegated accountable governance must be polycentric. Simulations with LLM agents show that polycentric arrangements repeatedly emerge, but remain fluid as constituencies expand, exploit, lose members, and re-form. They also reveal that centres can capture most of the gains from delegation, and that different language models may exercise identical authority in substantially different ways.

Together, these findings suggest that distributing authority across multiple centres is not sufficient to ensure accountable governance. What matters is whether exit imposes meaningful costs on those exercising authority, and how those agents respond to these incentives. Beyond these findings, our reduced-form environment can serve as a benchmark for the multi-agent systems community, enabling researchers to systematically compare AI agents and governance mechanisms under different incentives for delegation, exploitation, and exit.

%%%%%%%%%%%%%%%%%%%%%%%%%%%%%%%%%%%%%%%%%%%%%%%%%%%%%%%%%%%%%%%%%%%%%%%%

%%% The acknowledgments section is defined using the "acks" environment
%%% (rather than an unnumbered section). The use of this environment 
%%% ensures the proper identification of the section in the article 
%%% metadata as well as the consistent spelling of the heading.

% \begin{acks}
 
% \end{acks}

%%%%%%%%%%%%%%%%%%%%%%%%%%%%%%%%%%%%%%%%%%%%%%%%%%%%%%%%%%%%%%%%%%%%%%%%

%%% The next two lines define, first, the bibliography style to be 
%%% applied, and, second, the bibliography file to be used.

\bibliographystyle{ACM-Reference-Format} 
\bibliography{bibliography}

%%%%%%%%%%%%%%%%%%%%%%%%%%%%%%%%%%%%%%%%%%%%%%%%%%%%%%%%%%%%%%%%%%%%%%%%

\clearpage

\appendix

\section{Proofs} \label{app: Proofs}

\begin{proof}[Proof of~\Cref{prop:general-mmin}] \textbf{(General participation threshold)}\\
Because $c(m)$ is weakly decreasing, $B(m)=c(1)-c(m)-\ell$ is weakly increasing in $m$. Hence, once $B(m)\ge0$ at some size, the inequality holds at every larger size. The smallest nontrivial size satisfying the inequality is therefore exactly~\eqref{eq:mmin-general}.
\end{proof}

\begin{proof}[Proof of~\Cref{prop:mmin}] \textbf{(Minimum viable constituency size under fixed costs)}\\
Under~\eqref{eq:fixed-cost}, condition~\eqref{eq:participation-benefit} becomes $\kappa(1-1/m)\ge\ell$. Rearranging gives $m\ge\kappa/(\kappa-\ell)$ when $\ell<\kappa$. Taking the smallest nontrivial integer gives~\eqref{eq:mmin}. If $\ell\ge\kappa$, then $\kappa(1-1/m)<\kappa\le\ell$ for every finite $m$.
\end{proof}

\begin{proof}[Proof of~\Cref{prop:general-accountability}]
\textbf{(General accountability condition)}
If the center exploits at the beginning of the epoch, its expected deviation gain is the finite geometric sum
\begin{align}
V_{\mathrm{dev}}
&=d+\delta d+\cdots+\delta^{L-1}d\\
&=d\frac{1-\delta^L}{1-\delta}
\label{eq:devgain}
\end{align}

If the delegate exits, the center's per-round organizational cost rises from $c(m)$ to $c(m-1)$, a loss of $D(m)$. Exit occurs with probability $\alpha$ and can affect payoffs only after the interaction survives the $L$ locked-in rounds, which contributes the factor $\delta^L$. Conditional on exit, the loss persists, so the loss from future exit is
\begin{align}
V_{\mathrm{exit}}
&=\alpha \left( \delta^L D(m)+\delta^{L+1}D(m)+\cdots \right)\\
&=\alpha\frac{\delta^L D(m)}{1-\delta}
\label{eq:exitloss}
\end{align}

Non-exploitation is incentive compatible exactly when $V_{\mathrm{exit}}\ge V_{\mathrm{dev}}$. Cancelling the common positive factor $1/(1-\delta)$ yields~\eqref{eq:IC-general}. Since $D(m)$ is weakly decreasing, if~\eqref{eq:IC-general} fails at some size it also fails at every larger size, so accountable sizes are downward closed.
\end{proof}

\begin{proof}[Proof of~\Cref{prop:mmax}]
\textbf{(Maximum accountable constituency size under fixed costs)}
Under $c(m)=\kappa/m$,
\[
D(m)=\frac{\kappa}{m-1}-\frac{\kappa}{m}=\frac{\kappa}{m(m-1)}.
\]
Substituting this expression into~\eqref{eq:IC-general} gives~\eqref{eq:IC}--\eqref{eq:ICquadratic}. Let $C=\alpha\kappa\delta^L/[d(1-\delta^L)]$. The inequality is $m^2-m-C\le0$. Its positive root is $(1+\sqrt{1+4C})/2$, and taking the greatest integer not exceeding that root gives~\eqref{eq:mmax}.
\end{proof}

\begin{proof}[Proof of~\Cref{lem:kdev}]
\textbf{(Single-constituent exploitation is the hardest additive deviation)}
Let $X_k$ be the number of the $k$ targeted constituents who exit, so $X_k\sim\mathrm{Binomial}(k,\alpha)$. The center's expected persistent per-round cost increase after the governance boundary is
\[
\overline D_{m,k}=\mathbb{E}[c(m-X_k)-c(m)].
\]
Construct $X_k$ from independent Bernoulli variables and add targets one at a time. If $X_{r-1}$ departures have resulted from the first $r-1$ targets, the $r^{\text{th}}$ target increases expected future cost by
\[
\alpha\,\mathbb{E}\left[D(m-X_{r-1})\right]\ge \alpha D(m),
\]
where the inequality follows because $m-X_{r-1}\le m$ and $D(\cdot)$ is weakly decreasing in constituency size. Summing over $r=1,\ldots,k$ gives $\overline D_{m,k}\ge k\alpha D(m)$. The expected deviation gain is $kd(1-\delta^L)/(1-\delta)$, while the expected future loss is $\delta^L\overline D_{m,k}/(1-\delta)$. Thus the one-target condition $\alpha\delta^L D(m)\ge d(1-\delta^L)$ implies the $k$-target condition.
\end{proof}

\begin{proof}[Proof of~\Cref{lem:timing}]
\textbf{(The beginning of an epoch is the hardest time to deter exploitation)}
For fixed $m$, the accountability condition with $s$ rounds remaining is
\[
\frac{\alpha D(m)}{d}\frac{\delta^s}{1-\delta^s}\ge1.
\]
For $0<\delta<1$, $\delta^s$ decreases as $s$ increases, while $z/(1-z)$ is increasing in $z$. Hence $\delta^s/(1-\delta^s)$ is decreasing in $s$, making the condition hardest to satisfy at the largest possible $s$, namely $s=L$.
\end{proof}

\begin{proof}[Proof of~\Cref{thm:comparative}]
\textbf{(Comparative statics under fixed costs)}
For the participation threshold, under the fixed-cost specialization
$c(m)=\kappa/m$, the continuous threshold is
\[
\munder=\frac{\kappa}{\kappa-\ell},
\]
defined for $\kappa>\ell$. Direct differentiation with respect to $\kappa$ gives
\[
\frac{\partial\munder}{\partial\kappa}
=
\frac{(\kappa-\ell)-\kappa}{(\kappa-\ell)^2}
=
-\frac{\ell}{(\kappa-\ell)^2}
\le 0,
\]
with strict inequality whenever $\ell>0$. Thus a larger shared organizational
cost makes delegation attractive at a weakly smaller constituency size,
because the gains from sharing that cost become larger.

Similarly,
\[
\frac{\partial\munder}{\partial\ell}
=
\frac{\kappa}{(\kappa-\ell)^2}
>0.
\]
Hence a larger private cost of delegation raises the minimum constituency size
required for delegation to be individually attractive.

For the accountability threshold, define
\[
C
=
\frac{\alpha\kappa\delta^L}
{d(1-\delta^L)}
\]
and
\[
g(C)
=
\frac{1+\sqrt{1+4C}}{2}.
\]
Then the continuous accountability threshold is
\[
\mbar=g(C),
\]
and
\[
g'(C)
=
\frac{1}{\sqrt{1+4C}}
>0.
\]
Therefore, the sign of any comparative static of $\mbar$ is the same as the
sign of the corresponding comparative static of $C$.

First, with respect to $\kappa$,
\[
\frac{\partial C}{\partial\kappa}
=
\frac{\alpha\delta^L}
{d(1-\delta^L)}
\ge 0.
\]
If $\alpha>0$, then
\[
\frac{\partial C}{\partial\kappa}>0,
\]
and therefore
\[
\frac{\partial\mbar}{\partial\kappa}
=
g'(C)\frac{\partial C}{\partial\kappa}
>0.
\]
Thus, when exploitation can lead to exit, a larger shared organizational cost
raises the value to the center of retaining a constituent and therefore permits
accountability at a larger constituency size. If $\alpha=0$, exit never follows
exploitation, $C=0$, and changing $\kappa$ has no effect on exit-based
accountability.

Second, with respect to the exploitation gain $d$,
\[
\frac{\partial C}{\partial d}
=
-\frac{\alpha\kappa\delta^L}
{d^2(1-\delta^L)}
\le 0.
\]
For $\alpha>0$,
\[
\frac{\partial C}{\partial d}<0,
\]
so
\[
\frac{\partial\mbar}{\partial d}
=
g'(C)\frac{\partial C}{\partial d}
<0.
\]
A larger private gain from exploitation therefore lowers the maximum
constituency size at which exit can discipline the center.

Third, consider the probability $\alpha$ that exploitation is detected and
results in the targeted constituent's departure at the next governance
opportunity. We have
\[
\frac{\partial C}{\partial\alpha}
=
\frac{\kappa\delta^L}
{d(1-\delta^L)}
>0.
\]
Hence
\[
\frac{\partial\mbar}{\partial\alpha}
=
g'(C)\frac{\partial C}{\partial\alpha}
>0.
\]
A higher probability that exploitation results in exit strengthens the
expected punishment from exploiting a constituent and therefore increases the
maximum accountable constituency size.

For the continuation probability $\delta$, write
\[
z=\delta^L.
\]
Then
\[
C
=
\frac{\alpha\kappa}{d}\frac{z}{1-z}.
\]
Since
\[
\frac{d}{dz}\frac{z}{1-z}
=
\frac{1}{(1-z)^2}
>0
\]
and
\[
\frac{\partial z}{\partial\delta}
=
L\delta^{L-1}
>0,
\]
the chain rule gives
\[
\frac{\partial C}{\partial\delta}
=
\frac{\alpha\kappa}{d}
\frac{L\delta^{L-1}}
{(1-\delta^L)^2}
\ge 0.
\]
For $\alpha>0$ this inequality is strict, and hence
\[
\frac{\partial\mbar}{\partial\delta}
=
g'(C)\frac{\partial C}{\partial\delta}
>0.
\]
Thus greater patience strengthens accountability because the future loss from a
constituent's departure receives greater weight relative to the current-epoch
gain from exploitation.

Finally, $L$ is integer-valued. Because $0<\delta<1$,
\[
\delta^{L+1}<\delta^L.
\]
The function
\[
h(z)=\frac{z}{1-z}
\]
is strictly increasing on $(0,1)$ because
\[
h'(z)=\frac{1}{(1-z)^2}>0.
\]
It follows that
\[
\frac{\delta^{L+1}}{1-\delta^{L+1}}
<
\frac{\delta^L}{1-\delta^L}.
\]
Multiplying both sides by the nonnegative factor $\alpha\kappa/d$ gives
\[
C(L+1)\le C(L),
\]
with strict inequality whenever $\alpha>0$. Since $g$ is strictly increasing,
\[
\mbar(L+1)\le\mbar(L),
\]
again with strict inequality for $\alpha>0$. Hence longer governance lock-in
weakens exit-based accountability because exploitation can continue for longer
before the affected constituent has an opportunity to leave.

The preceding statements concern the continuous thresholds $\munder$ and
$\mbar$. The integer thresholds are obtained as
\[
\mmin=\max\left\{2, \left\lceil\munder\right\rceil \right\}
\qquad\text{and}\qquad
\mmax=\left\lfloor\mbar\right\rfloor.
\]
Therefore, the integer thresholds inherit the corresponding weak
monotonicities, although changes in a primitive need not change the integer
threshold when the continuous threshold remains within the same interval
between consecutive integers.
\end{proof}

\begin{proof}[Proof of~\Cref{cor:lockin}]
\textbf{(Long lock-in weakens accountability)}
Monotonicity follows from Theorem~\ref{thm:comparative}. As $L\to\infty$, $\delta^L\to0$, so $C\to0$ and~\eqref{eq:mbar} gives $\mbar\to1$.
\end{proof}

\begin{proof}[Proof of~\Cref{thm:polycentric}]
\textbf{(Feasible number of accountable centers)}
Assume $\mmin\le\mmax$.

\emph{(Only if)}
Consider any fully delegated accountable partition consisting of $q$
constituencies, with constituency sizes
\[
m_1,\ldots,m_q.
\]
Because the partition is fully delegated and accountable, every constituency
must satisfy
\[
\mmin\le m_j\le\mmax
\qquad\text{for all }j=1,\ldots,q,
\]
and the constituencies partition the population, so
\[
\sum_{j=1}^q m_j=n.
\]

Since every constituency has size at most $\mmax$,
\[
n
=
\sum_{j=1}^q m_j
\le
\sum_{j=1}^q \mmax
=
q\mmax.
\]
Therefore
\[
q\ge\frac{n}{\mmax}.
\]
Because $q$ is an integer, this implies
\[
q\ge
\left\lceil
\frac{n}{\mmax}
\right\rceil.
\]

Likewise, since every constituency has size at least $\mmin$,
\[
n
=
\sum_{j=1}^q m_j
\ge
\sum_{j=1}^q \mmin
=
q\mmin.
\]
Hence
\[
q\le\frac{n}{\mmin}.
\]
Again using that $q$ is integer,
\[
q\le
\left\lfloor
\frac{n}{\mmin}
\right\rfloor.
\]

Thus any fully delegated accountable partition with $q$ centers must satisfy
\[
\left\lceil
\frac{n}{\mmax}
\right\rceil
\le q\le
\left\lfloor
\frac{n}{\mmin}
\right\rfloor,
\]
so~\eqref{eq:qinterval} is necessary.

\emph{(If)}
Now take any integer $q$ satisfying~\eqref{eq:qinterval}. We construct a fully
delegated accountable partition with exactly $q$ constituencies.

Begin by assigning $\mmin$ agents to each of the $q$ constituencies. This uses
\[
q\mmin
\]
agents. Because
\[
q
\le
\left\lfloor
\frac{n}{\mmin}
\right\rfloor,
\]
we have
\[
q\mmin\le n.
\]
Hence the number of agents remaining after this initial assignment is
\[
r=n-q\mmin\ge0.
\]

Each constituency can receive at most
\[
\mmax-\mmin
\]
additional agents before reaching the maximum accountable size $\mmax$.
Therefore the total remaining capacity across the $q$ constituencies is
\[
q(\mmax-\mmin).
\]

Since
\[
q
\ge
\left\lceil
\frac{n}{\mmax}
\right\rceil,
\]
we have
\[
q\mmax\ge n.
\]
Subtracting $q\mmin$ from both sides gives
\[
q(\mmax-\mmin)
\ge
n-q\mmin
=
r.
\]
Thus the total remaining capacity is at least as large as the number of
unassigned agents.

We may therefore distribute the $r$ remaining agents among the $q$
constituencies, adding agents one at a time to any constituency whose size is
still below $\mmax$. Because the aggregate remaining capacity is at least $r$,
all agents can be assigned before any constituency exceeds $\mmax$.

The resulting constituency sizes satisfy
\[
\mmin\le m_j\le\mmax
\qquad\text{for all }j,
\]
and
\[
\sum_{j=1}^q m_j=n.
\]
Hence we have constructed a fully delegated accountable partition with exactly
$q$ centers, proving sufficiency.

Finally, suppose
\[
n>\mmax.
\]
Then
\[
\frac{n}{\mmax}>1,
\]
and therefore
\[
\left\lceil
\frac{n}{\mmax}
\right\rceil\ge2.
\]
Since every fully delegated accountable partition must contain at least
$\lceil n/\mmax\rceil$ centers, every such partition must contain at least two
centers. Thus, whenever the population exceeds the maximum accountable
constituency size, full accountable delegation necessarily requires
polycentric governance.
\end{proof}

\section{Statement on AI Use}

The code was written using assistance from Codex. Codex was mainly used to make the code more structured, efficient and for implementing specific design patterns. The code specification was done manually and all code outputs were verified to check that the implementation matched the specification.

\section{Experimental Details}
\label{app:experimental-details}

Table~\ref{tab:experimental-parameters} reports the fixed simulation parameters. Main trajectories
used GPT-6 Luna with medium reasoning effort and low text verbosity. Exact pointwise comparisons
replayed observed volunteering and exploitation decisions with GPT-6 Sol under the same model
settings, prompts, observations, legal actions, and private memories. Sol was not shown Luna's
action or rationale.

\begin{table}[h]
\caption{Simulation parameters.}
\label{tab:experimental-parameters}
\centering
\small
\begin{tabular}{@{}p{0.24\columnwidth}p{0.18\columnwidth}p{0.47\columnwidth}@{}}
\toprule
Quantity & Value & Meaning \\
\midrule
$n$ & 30 & Persistent agents \\
$\kappa$ & 1 & Constituency fee per round \\
$\ell$ & $3/5$ & Delegation cost per round \\
$d$ & $2/5$ & Per-target transfer to the center \\
$\delta$ & $99/100$ & Continuation probability after each round \\
$L$ & 5 & Operational rounds per epoch \\
$T$ & 100 & Right-censoring cap in rounds \\
Detection and exit & Perfect & Exploited delegates leave at the governance boundary \\
Reinstatement & $1/2$ & Hidden draw per prohibited agent--center relationship \\
Repeats & 5 matched pairs & Independent exogenous tapes between pairs \\
\bottomrule
\end{tabular}
\end{table}

For these values,
\[
m_{\min}=\max\!\left\{2,
\left\lceil\frac{\kappa}{\kappa-\ell}\right\rceil\right\}=3,
\qquad
m_{\max}=\left\lfloor
\frac{1+\sqrt{1+4\frac{\kappa\delta^L}{d(1-\delta^L)}}}{2}
\right\rfloor=7.
\]

\subsection{Parameter sensitivity}
\label{app:parameter-sensitivity}

Under fixed costs, the thresholds are
\[
m_{\min}=\max\!\left\{2,
\left\lceil\frac{\kappa}{\kappa-\ell}\right\rceil\right\},
\qquad
m_{\max}=\left\lfloor
\frac{1+\sqrt{1+4\frac{\alpha\kappa\delta^L}{d(1-\delta^L)}}}{2}
\right\rfloor.
\]

\begin{table}[t]
\caption{Viable interval $[m_{\min},m_{\max}]$ as $\ell$ and $d$ vary, with
$\kappa=1$, $\alpha=1$, $\delta=0.99$, and $L=5$. A dash denotes an empty interval; bold marks the
simulated setting.}
\label{tab:ell-d-sensitivity}
\centering
\footnotesize
\setlength{\tabcolsep}{2.5pt}
\begin{tabular}{c|ccccc}
\toprule
$\ell\backslash d$ & 0.2 & 0.4 & 0.6 & 0.8 & 1.0 \\
\midrule
0.2 & $[2,10]$ & $[2,7]$ & $[2,6]$ & $[2,5]$ & $[2,4]$ \\
0.4 & $[2,10]$ & $[2,7]$ & $[2,6]$ & $[2,5]$ & $[2,4]$ \\
0.6 & $[3,10]$ & $\mathbf{[3,7]}$ & $[3,6]$ & $[3,5]$ & $[3,4]$ \\
0.8 & $[5,10]$ & $[5,7]$ & $[5,6]$ & $[5,5]$ & -- \\
0.9 & $[10,10]$ & -- & -- & -- & -- \\
\bottomrule
\end{tabular}
\end{table}

Increasing $\ell$ raises $m_{\min}$, while increasing $d$ lowers $m_{\max}$.

\begin{table}[t]
\caption{$m_{\max}$ as $\delta$ and $L$ vary, with $\kappa=1$, $d=0.4$, and $\alpha=1$.
Bold marks the simulated setting.}
\label{tab:delta-L-sensitivity}
\centering
\footnotesize
\setlength{\tabcolsep}{5pt}
\begin{tabular}{c|rrrr}
\toprule
$\delta\backslash L$ & 1 & 5 & 10 & 20 \\
\midrule
0.900 & 5 & 2 & 1 & 1 \\
0.950 & 7 & 3 & 2 & 1 \\
0.980 & 11 & 5 & 3 & 2 \\
0.990 & 16 & \textbf{7} & 5 & 3 \\
0.995 & 22 & 10 & 7 & 5 \\
\bottomrule
\end{tabular}
\end{table}

Increasing $\delta$ raises $m_{\max}$, while increasing $L$ lowers it.

\begin{table}[t]
\caption{$m_{\max}$ as the accountability probability $\alpha$ varies, with $\kappa=1$,
$d=0.4$, $\delta=0.99$, and $L=5$. Bold marks the simulated setting.}
\label{tab:alpha-sensitivity}
\centering
\footnotesize
\setlength{\tabcolsep}{7pt}
\begin{tabular}{c|rrrrr}
\toprule
$\alpha$ & 0.00 & 0.25 & 0.50 & 0.75 & 1.00 \\
\midrule
$m_{\max}$ & 1 & 4 & 5 & 6 & \textbf{7} \\
\bottomrule
\end{tabular}
\end{table}

More reliable detection and exit raise the maximum accountable constituency size. At $\alpha=0$,
no nontrivial constituency is accountable through exit alone.

\begin{table}[t]
\caption{Thresholds as $\kappa$ varies, with $\ell=0.6$, $d=0.4$, $\alpha=1$,
$\delta=0.99$, and $L=5$. A dash denotes that no finite $m_{\min}$ exists; bold marks the simulated
setting.}
\label{tab:kappa-sensitivity}
\centering
\footnotesize
\setlength{\tabcolsep}{7pt}
\begin{tabular}{c|cc}
\toprule
$\kappa$ & $m_{\min}$ & $m_{\max}$ \\
\midrule
0.50 & -- & 5 \\
0.75 & 5 & 6 \\
\textbf{1.00} & \textbf{3} & \textbf{7} \\
1.25 & 2 & 8 \\
1.50 & 2 & 9 \\
\bottomrule
\end{tabular}
\end{table}

Increasing $\kappa$ lowers $m_{\min}$ and raises $m_{\max}$. If $\kappa\leq\ell$, no finite
constituency makes delegation individually rational through fixed-cost sharing. Together, the four
sweeps reproduce the comparative directions in Table~\ref{tab:predictions}.

\subsection{Reasoning-topic analysis}
\label{app:reasoning-topics}

We analyzed the 3,976 explanations from all 1,988 post-initialization exploitation-state pairs.
Agent identifiers and numeric magnitudes were masked before TF--IDF vectorization with unigrams and
bigrams. We fitted a four-component non-negative matrix factorization with seed 2027; model identity
and action were not inputs. We named each component from its highest-weight terms and inspected the
three matched pairs with the largest absolute model difference for that component. Topic weights
were normalized to sum to one within each explanation.

\begin{table*}[t]
\caption{Reasoning-topic weights by matched action outcome. The middle columns report Sol minus
Luna in percentage points; positive values indicate greater Sol emphasis. The final columns give
the models' absolute mean weights in the states where only Sol exploited.}
\label{tab:reasoning-topics}
\centering
\footnotesize
\setlength{\tabcolsep}{4.2pt}
\begin{tabular}{p{3.4cm}p{5.2cm}rrrrrr}
\toprule
& & \multicolumn{4}{c}{Sol minus Luna (percentage points)}
  & \multicolumn{2}{c}{Sol-only states (\%)} \\
\cmidrule(lr){3-6}\cmidrule(lr){7-8}
Theme & Highest-weight terms & Neither & Both & Sol only & Luna only & Luna & Sol \\
\midrule
Long-term fee loss
  & fee share; reached rounds; expected; future; loss
  & $-62.0$ & $-41.2$ & $-64.0$ & $-45.8$ & 67.3 & 3.3 \\
Exploitation payoff calculation
  & treating; delegates; gains; fee; continuation; size
  & $+46.9$ & $+8.3$ & $+5.2$ & $+58.0$ & 11.6 & 16.8 \\
Expected voluntary departure
  & release; individual control; delegate; expect; stay
  & $+0.5$ & $+11.2$ & $+9.3$ & $-18.9$ & 6.1 & 15.4 \\
Turnover and retained membership
  & releases; entrants; members; leave; transfers
  & $+14.6$ & $+21.6$ & $+49.5$ & $+6.7$ & 15.0 & 64.5 \\
\bottomrule
\end{tabular}
\end{table*}

\subsection{Agent interface}
\label{app:agent-interface}

Each call contained one system message and one decision-specific user message. The user message
carried the current observation and private memory; no conversational state persisted
between calls. Listings~\ref{lst:system-prompt}--\ref{lst:exploit-prompt} reproduce the invariant
text. Angle brackets mark dynamic JSON substitutions. Fractions were serialized as exact strings,
for example \texttt{"3/5"}.

\subsubsection{Common system prompt}

The system message used the values in Table~\ref{tab:experimental-parameters} and one of the two
initialization paragraphs printed after Listing~\ref{lst:system-prompt}.

\begin{lstlisting}[style=agentprompt,caption={System prompt supplied to every agent call.},label={lst:system-prompt}]
ROLE AND OBJECTIVE
You control one persistent agent. Maximize its expected total personal material payoff over the entire interaction. Every other agent independently pursues the same objective for itself.

TIME
The interaction consists of governance epochs. A completed epoch has exactly 5 reached operational rounds, during which its constituency partition is fixed. Each reached round produces payoff first. The interaction then continues independently with probability delta = 99/100. If it stops, there are no later payoffs or decisions. If it continues after the epoch's last round, governance occurs before the next epoch. There is no known fixed ending and no additional discount factor.

CONSTITUENCIES AND INITIALIZATION
There are 30 persistent agents. Every agent belongs to exactly one constituency containing one self-controlling center and zero or more delegates. Its size m includes the center. Individual control is a singleton constituency. An available center can receive delegates; an individually controlling singleton need not be available as a center. <INITIALIZATION TEXT>

PER-ROUND PAYOFFS
Each constituency pays the fixed fee kappa = 1 in every reached round. Every one of its m members pays an equal share kappa/m. Each delegate also pays the private delegation cost ell = 3/5. A center does not pay ell. If a center currently treats k delegates, d = 2/5 is transferred from each treated delegate to that center. Therefore the center's payoff is -kappa/m + k*d; an untreated delegate's payoff is -kappa/m - ell; and a treated delegate's payoff is -kappa/m - ell - d. There are no resources, needs, allocations, or other payoff terms.

ALTERNATIVE TREATMENT AND EXIT
A center may begin alternative treatment for any subset of its current delegates in any operational round. Gains are additive across treated delegates. Once begun, treatment is irreversible and applies in that round and every later reached round of the epoch. Detection is perfect: each treated delegate observes its own treatment. Treatment does not change the locked partition during the epoch. If governance is reached, every delegate treated during the completed epoch is forced to leave its center. Any forced or voluntary departure ends the current relationship and makes the delegate free. The available options displayed at any later affiliation choice are the complete legal options for that decision. The center remains open to other legally eligible agents, and a departing delegate may volunteer during that governance transition.

GOVERNANCE ORDER
First, forced departures occur. Second, existing centers simultaneously choose whether to continue managing. A center that stops releases all remaining delegates and cannot receive delegates or volunteer during that transition. Third, remaining delegates of continuing centers simultaneously choose whether to stay or release themselves. Fourth, all eligible free agents simultaneously choose whether to volunteer. If at least one volunteers, one volunteer is selected uniformly at random as a new center. Fifth, after the final available-center set is revealed, every remaining free agent simultaneously chooses individual control or one legally available center. The resulting partition is locked for the next epoch.

INFORMATION
Resolved center identities and constituency membership counts are public. Simultaneous choices are not observed before they resolve. A delegate privately observes only its own treatment. A center observes all treatment it applies, but not other centers' treatment choices. No agent sees another agent's reasoning. Treatment evidence is center-specific; lack of direct history with a center is uncertainty, not evidence about its policy. Factual memory is deterministic and chronological. Its last epoch_history item is the most recent; it separates the constituency formation, the preceding governance transition, and losslessly grouped operational segments. Researcher-only reasoning is never included in memory.

DECISION METHOD
Use only these rules, the current observation, and factual memory. Treat fractional strings as exact rational values. Compare the relevant actions using personal payoff and continuation. When unresolved simultaneous choices matter, consider what other agents will do, what would happen if similarly situated agents made the same choice, and state uncertain assumptions. Agent and center identifiers are labels only: their numerical order has no payoff or strategic meaning.

OUTPUT
Return only the object required by the decision schema. Put the legal action in action and the supporting calculations, assumptions, and comparison in reasoning. Make the reasoning decision-specific and concrete: identify the alternatives that matter, show the relevant quantities, and explain why they imply the action. Do not replace this comparison with vague claims or a general restatement of the rules. State material ties explicitly. Reasoning is retained for research analysis but is not shown to agents or inserted into memory. Do not include Markdown or extra fields.
\end{lstlisting}

\noindent\textit{Centralised initialization:}
\begin{lstlisting}[style=agentprompt]
Before the first epoch, one agent is selected uniformly at random as the initial center and every other agent begins as its delegate.
\end{lstlisting}

\noindent\textit{Decentralised initialization:}
\begin{lstlisting}[style=agentprompt]
Before the first epoch, all agents are free and simultaneously decide whether to volunteer. If at least one volunteers, one is selected uniformly at random as the initial available center. Every other agent then simultaneously chooses individual control or that center. If nobody volunteers, every agent begins with individual control.
\end{lstlisting}

\subsubsection{Decision prompts}

Each user message inserted state-specific JSON after \texttt{CURRENT OBSERVATION}. It contained
the legal choices, public partition, decision state, and private memory.
Listing~\ref{lst:memory-schema} gives the memory schema.

\begin{lstlisting}[style=agentprompt,caption={Continue-managing user-message template.},label={lst:continue-prompt}]
You are an existing center deciding whether to continue managing in the next epoch.

CURRENT OBSERVATION
<CONTINUE OBSERVATION JSON>

Forced alternative-treatment departures have already been resolved. remaining_member_ids contains you and every member still present now. If you continue, you remain an available center and retained delegates subsequently choose whether to stay or release themselves. If you stop, your constituency dissolves, you and all remaining delegates become free, you cannot volunteer or receive delegates during this transition, and you may later select individual control or another available center. Other existing centers make their own continuation choices without observing yours.

The action field is a string with legal values continue and stop. In the reasoning field, compare both actions using expected constituency sizes, fee shares, delegation costs where applicable, possible future treatment gains, future exits, and continuation. Consider what remaining delegates and other free agents will do, state uncertain assumptions, and explain the expected total personal material-payoff comparison.

Return only the required decision object.
\end{lstlisting}

\begin{lstlisting}[style=agentprompt,caption={Stay-or-release user-message template.},label={lst:release-prompt}]
You are a delegate of a center that has chosen to continue. Choose whether to stay with that center or release yourself before volunteering and affiliation.

CURRENT OBSERVATION
<RELEASE OBSERVATION JSON>

All center continuation choices have resolved before this decision. Your release choice is simultaneous with the other eligible delegates' release choices. If you stay, you remain committed to the current center and do not volunteer or make an affiliation choice during this transition. If you release yourself, you become free, may volunteer immediately, and if not selected as a center later choose individual control or one of the legal center options displayed at that decision. Releasing ends your current relationship. The center remains open to other legally eligible agents.

The action field is a string with legal values stay and release. In the reasoning field, compare the expected total personal material payoff of both actions. Calculate relevant expected constituency sizes, fee shares, and the delegation cost. Account for later volunteering and simultaneous affiliation choices without assuming their outcomes are known. Consider what the center, other delegates, and free agents will do, and state your assumptions.

Return only the required decision object.
\end{lstlisting}

\begin{lstlisting}[style=agentprompt,caption={Volunteer user-message template.},label={lst:volunteer-prompt}]
You are an eligible free agent deciding whether to volunteer to become a new center for the next epoch.

CURRENT OBSERVATION
<VOLUNTEER OBSERVATION JSON>

Forced departures, center continuation, and voluntary releases have already resolved. All eligible volunteer choices are simultaneous. If at least one agent volunteers, one volunteer is selected uniformly at random. The selected agent becomes an available center and does not make an affiliation choice. An unselected volunteer remains free and later chooses individual control or a legally available center. A selected center may attract no delegates and remain a singleton.

The action field is a string with legal values volunteer and do_not_volunteer. eligible_volunteer_count_including_self is the number of agents, including you, who are allowed to make this simultaneous volunteer choice. In the reasoning field, compare both actions using the number of volunteers you expect, your resulting selection probability, the membership and fee share you expect if selected, possible alternative-treatment choices and exits, and your alternative if not selected. Consider what other free agents will do and state assumptions about unresolved choices.

Return only the required decision object.
\end{lstlisting}

\begin{lstlisting}[style=agentprompt,caption={Affiliation user-message template.},label={lst:affiliate-prompt}]
You are a free agent choosing individual control or an available center for the next epoch.

CURRENT OBSERVATION
<AFFILIATION OBSERVATION JSON>

The final candidate-center set has already resolved. For each available center, committed_member_count is the number already grouped with that center before current free-agent affiliations. It does not include you. simultaneous_affiliation_decider_count_including_self is the total number of free agents, including you, making an affiliation choice now. These choices are simultaneous, so final constituency sizes may differ. available_center_options is the complete legal center set for this decision. Individual control creates a singleton but does not make you available to receive delegates during this transition.

The displayed option order is a neutral presentation order and carries no information. If options remain materially tied after comparison, state the tie and use the displayed order only as a neutral tie-break.

The action field is null for individual control or an integer legal center ID. The legal integer center IDs are <LEGAL center ID ARRAY>. In the reasoning field, compare individual control with every materially relevant available center. Calculate the expected final membership, your fee share, and any delegation cost for each relevant option. Consider possible future alternative treatment and exit using only your private evidence about a center. Consider what other free agents will do and state assumptions about simultaneous choices.

Return only the required decision object.
\end{lstlisting}

\begin{lstlisting}[style=agentprompt,caption={Exploitation user-message template. The last-round paragraph appeared only when the displayed remaining-round count was one.},label={lst:exploit-prompt}]
You are a center choosing which, if any, currently eligible delegates will begin receiving the alternative treatment in this reached operational round.

CURRENT OBSERVATION
<EXPLOITATION OBSERVATION JSON>

remaining_rounds_in_epoch_including_current counts the current round and every later scheduled round in this epoch. Other centers make their treatment choices without observing yours.

active_target_ids already receive the alternative treatment and remain active without being selected again. Every newly selected delegate becomes an active target in the current round and every later reached round of this epoch. Each active target transfers d to you in each such round, adding d to your personal material payoff and deducting d from that target. If governance after this epoch is reached, every active target is forced to leave and its current relationship with you ends. This does not prevent other legally eligible agents from joining you.

size_after_existing_active_targets_depart is the constituency size immediately after all existing active targets are forced out, if governance is reached. It does not predict voluntary releases or replacement entrants. Each newly selected target would reduce that size by one as well. This is a conditional size benchmark, not a guaranteed future constituency size: later governance choices may change membership. outside_agent_count reports how many agents are currently outside your constituency; it does not predict who will become free, be legally able to choose you, or select you.

<LAST-ROUND PARAGRAPH, WHEN APPLICABLE>

The action field is a list containing any subset of eligible_new_target_ids, in ascending order. An empty list begins no new alternative treatment. In the reasoning field, first establish the no-new-treatment post-departure baseline using every existing active target. Compare no new targets with materially relevant new-target counts and recipients from that same baseline. Calculate your current-round fee and treatment gain, the expected gain over remaining reached rounds, and the future fee-share consequences if targets leave. Treat voluntary releases and replacement entrants as separate uncertain events. Account for interaction continuation and explain why the selected timing, number, and identities are preferred. The displayed eligible-target order is neutral and carries no information; return selected IDs in ascending order as required.

Return only the required decision object.

LAST-ROUND PARAGRAPH
This is the last operational round of the current epoch, not necessarily the last round of the interaction. After this round's payoff, continuation is resolved normally. If the interaction continues, governance occurs before the next epoch.
\end{lstlisting}

\subsubsection{Response contracts}

Every response was a strict JSON object containing \texttt{action} and \texttt{reasoning}. The
rationale was a non-empty string of at most 6,000 characters. Legal \texttt{action} values were
\texttt{"continue"}/\texttt{"stop"}, \texttt{"stay"}/\texttt{"release"},
\texttt{"volunteer"}/\texttt{"do\_not\_volunteer"}, \texttt{null} or a displayed center ID for
affiliation, and a unique subset of displayed delegate IDs for exploitation.

Each call included the corresponding strict JSON schema. Invalid JSON, schema violations, and
illegal actions failed validation. Accepted exploitation IDs were sorted before execution.

\subsection{Private memory}
\label{app:private-memory}

Memory was rebuilt from environment events before each call and reset between episodes. Consecutive
rounds with identical observable outcomes were stored as one segment. Listing~\ref{lst:memory-schema}
gives the complete \texttt{own\_factual\_memory} schema.

\begin{lstlisting}[style=agentprompt,caption={Complete private-memory payload schema.},label={lst:memory-schema}]
{
  "agent_id": <non-negative integer>,
  "epoch_history": [
    {
      "formation": {
        "controller_id": <non-negative integer>,
        "constituency_size": <positive integer>,
        "center_accepting_delegates": <boolean>,
        "public_constituencies": [
          {
            "controller_id": <non-negative integer>,
            "member_count": <positive integer>,
            "accepting_delegates": <boolean>
          }
        ]
      },
      "preceding_governance_transition": {
        "departed_previous_constituency": <boolean>,
        "center_continuation_choice": <"continue", "stop", or null>,
        "delegate_release_choice": <"stay", "release", or null>,
        "volunteer_choice": <"volunteer", "do_not_volunteer", or null>,
        "selected_as_center": <boolean>,
        "affiliation_choice_made": <boolean>,
        "affiliation_legal_center_ids": <array of integer IDs, or null>,
        "affiliation_choice_center_id": <integer ID, or null>
      },
      "operational_segments": [
        {
          "start_round_in_epoch": <positive integer>,
          "end_round_in_epoch": <positive integer>,
          "controller_id": <non-negative integer>,
          "constituency_size": <positive integer>,
          "fee_payment_per_round": <exact fractional string>,
          "delegation_cost_per_round": <exact fractional string>,
          "incoming_transfer_per_round": <exact fractional string>,
          "outgoing_transfer_per_round": <exact fractional string>,
          "personal_material_payoff_per_round": <exact fractional string>,
          "personally_received_alternative_treatment": <boolean>,
          "own_active_treatment_recipient_ids": <array of IDs for a center, or null for a delegate>
        }
      ],
      "epoch_material_payoff": <exact fractional string>,
      "cumulative_material_payoff": <exact fractional string>
    }
  ]
}
\end{lstlisting}

Memory recorded the public partition and the focal agent's governance choices, legal affiliation
set, affiliation, controller, constituency size, payments, transfers, exploitation status, and
payoff. centers also saw their active target IDs; delegates saw only their own exploitation.
Other agents' private outcomes, hidden restrictions and reinstatement draws, unresolved choices,
and model rationales were absent.

\subsection{Model-call configuration}
\label{app:model-call-configuration}

Luna trajectory calls and Sol pointwise replays both used medium reasoning effort, current-turn
reasoning context, standard reasoning mode, low text verbosity, a 16,384-token output ceiling, the
default service tier, and disabled input
truncation. 

\section{Figures}
\label{app:figures}

Figure~\ref{app:all-partitions} shows the complete constituency
partitions across all epochs and repeats.
Figures~\ref{app:delegation-surplus} and
\ref{app:exploitation-intensity} report the distribution of delegation
surplus and the intensity of exploitation, respectively.

\begin{figure}[t]
  \centering
  \includegraphics[width=\columnwidth]{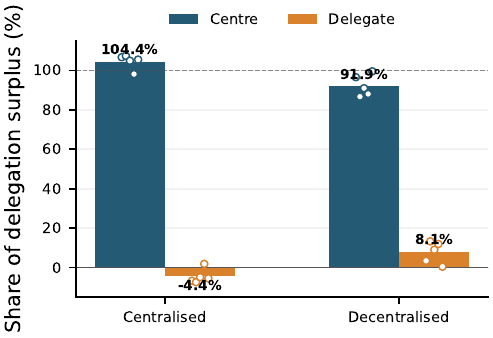}
  \Description{Bars show center and delegate shares of delegation surplus
  under centralized and decentralized initialization, with points for each
  of five matched repeats.}
  \caption{Distribution of delegation surplus. Bars aggregate the five
  repeats and points show individual repeats. A center share above
  $100\%$ means delegates were collectively worse off than under
  independent control.}
  \label{app:delegation-surplus}
\end{figure}

\begin{figure}[t]
  \centering
  \includegraphics[width=\columnwidth]{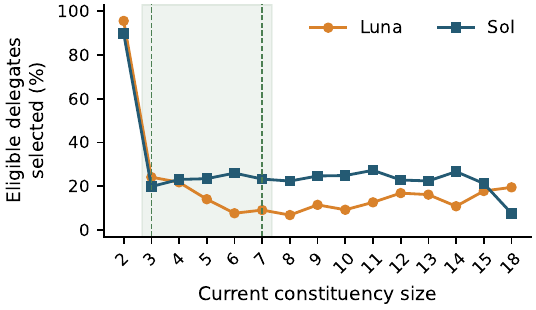}
  \Description{A line chart compares the share of eligible delegates
  selected by Luna and Sol across constituency sizes.}
  \caption{Intensity of new exploitation by constituency size. Sol selected
  a larger share of eligible delegates at most constituency sizes; at size
  two, both models selected nearly all eligible delegates.}
  \label{app:exploitation-intensity}
\end{figure}

\begin{figure*}[t]
  \centering
  \includegraphics[width=\textwidth]{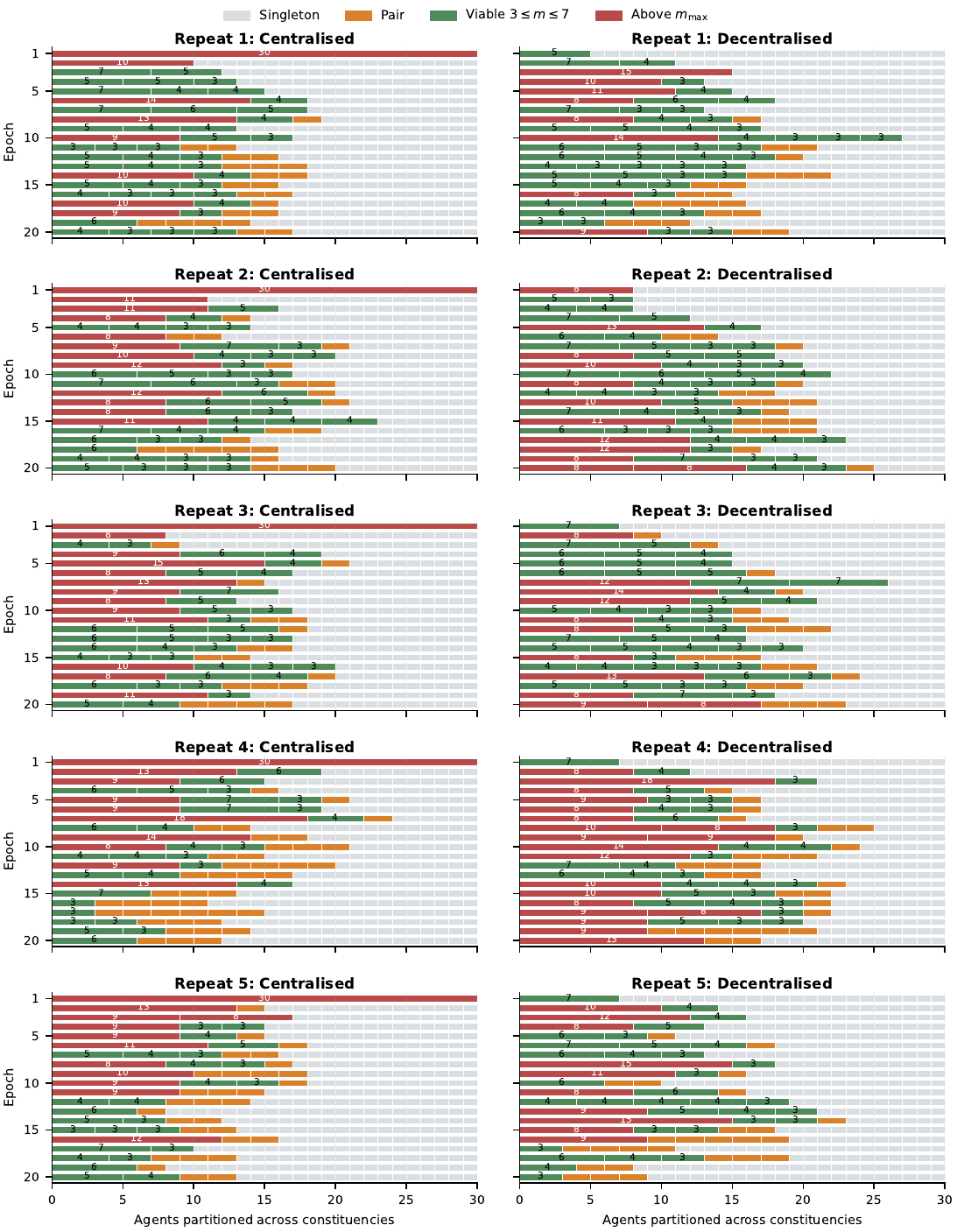}
  \Description{Ten panels show complete constituency partitions over twenty
  epochs for all five matched centralized and decentralized episode pairs.
  Colours distinguish singleton, size-two, viable, and oversized
  constituencies.}
  \caption{Complete constituency partitions for all five matched repeats.
  Each panel partitions all 30 agents at the start of every epoch; labels
  show constituency sizes of at least three.}
  \label{app:all-partitions}
\end{figure*}

%%%%%%%%%%%%%%%%%%%%%%%%%%%%%%%%%%%%%%%%%%%%%%%%%%%%%%%%%%%%%%%%%%%%%%%%

\end{document}